\crefname{section}{§}{§§}
\Crefname{section}{§}{§§}
\newcommand{\ToolName}{\textsc{Squid}}
\newcommand{\ToolNameNRR}{\textsc{SquidNrr}}
\newcommand{\ToolNameNQR}{\textsc{SquidNbr}}
\newcommand{\CodeQL}{\textsc{CodeQL}}
\newcommand{\Continue}{\textnormal{\textbf{continue}}}
\newcommand{\Break}{\textnormal{\textbf{break}}}
\newcommand{\CaseNumber}{31}
\theoremstyle{definition}
\newtheorem{thm3}{Definition}[section]
\newtheorem{newdefinition}[thm3]{Definition}
\newtheorem{thm4}{Example}[section]
\newtheorem{newexample}[thm4]{Example}
\newtheorem{thm5}{Theorem}[section]
\newtheorem{ntheorem}[thm5]{Theorem}
\newtheorem{thm6}{Property}[section]
\newtheorem{nproperty}[thm6]{Property}
\newcommand{\mybox}[1]{
	\begin{tcolorbox}[
		boxsep=-0.5pt,
		standard jigsaw,
		boxrule=0.6pt,
		opacityback=0,
		sharp corners]
		#1
	\end{tcolorbox}
}
\newcommand{\mytodoblue}[1]{\textcolor{blue}{\ding{46}~{\sf}~#1}}
\newcommand{\mytodored}[1]{\textcolor{red}{\ding{46}~{\sf}~#1}}
\newcommand{\mytodoorange}[1]{\textcolor{orange}{\ding{46}~{\sf}~#1}}
\newcommand{\mytodocyan}[1]{\textcolor{cyan}{\ding{46}~{\sf}~#1}}
\DeclareSymbolFont{extraup}{U}{zavm}{m}{n}
\DeclareMathSymbol{\varheart}{\mathalpha}{extraup}{86}
\DeclareMathSymbol{\vardiamond}{\mathalpha}{extraup}{87}
\DeclareMathSymbol{\varsuit}{\mathalpha}{extraup}{88}
\newif\ifshowcomments
\newcommand{\wcp}[1]{\mytodocyan{[wcp: #1]}}
\newcommand{\yao}[1]{\mytodored{[yao: #1]}}
\newcommand{\tang}[1]{\mytodoorange{[tang: #1]}}
\newcommand{\shi}[1]{\mytodoblue{[shi: #1]}}
\newcommand{\wwy}[1]{\mytodoblue{[wwy: #1]}}
\newcommand{\cpw}[1]{\mytodocyan{[wcp: #1]}}
\newcommand{\wcp}[1]{}
\newcommand{\yao}[1]{}
\newcommand{\tang}[1]{}
\newcommand{\shi}[1]{}
\newcommand{\wwy}[1]{}
\newcommand{\cpw}[1]{}
\title{Synthesizing Conjunctive Queries for Code Search} 
\author{Chengpeng Wang}{The Hong Kong University of Science and Technology, China}{cwangch@cse.ust.hk}{https://orcid.org/0000-0003-0617-5322}{}
\author{Peisen Yao}{Zhejiang University, Hangzhou, China}{pyaoaa@zju.edu.cn}{https://orcid.org/0000-0003-0342-9518}{}
\author{Wensheng Tang}{The Hong Kong University of Science and Technology, China}{wtangae@cse.ust.hk}{https://orcid.org/0000-0002-4259-3321}{}
\author{Gang Fan}{Ant Group, Shenzhen, China}{fangang@antgroup.com}{https://orcid.org/0000-0002-8633-6036}{}
\author{Charles Zhang}{The Hong Kong University of Science and Technology, China}{charlesz@cse.ust.hk}{https://orcid.org/0000-0001-6417-1034}{}
\authorrunning{C. Wang, P. Yao, W. Tang, G. Fan, and C. Zhang} 
\keywords{Query Synthesis, Multi-modal Program Synthesis, Code Search} 
\begin{document}

\maketitle

\begin{abstract}
This paper presents \ToolName, a new conjunctive query synthesis algorithm for searching code with target patterns. 
Given positive and negative examples along with a natural language description, \ToolName\ analyzes the relations derived from the examples by a Datalog-based program analyzer and synthesizes a conjunctive query expressing the search intent. The synthesized query can be further used to search for desired grammatical constructs in the editor. To achieve high efficiency, we prune the huge search space by removing unnecessary relations and enumerating query candidates via refinement. We also introduce two quantitative metrics for query prioritization to select the queries from multiple candidates, yielding desired queries for code search. We have evaluated \ToolName\ on over thirty code search tasks. It is shown that \ToolName\ successfully synthesizes the conjunctive queries for all the tasks, taking only 2.56 seconds on average.
\end{abstract}

\section{Introduction}
\label{sec:intro}


Developers often need to search their code for target patterns in various scenarios,
such as API understanding~\cite{LiWWYXM16}, code refactoring~\cite{YangSLYC18}, and program repair~\cite{TianR17}.
According to recent studies~\cite{NaikMSWNR21, LiuXLGYG22},
existing efforts have to
compromise between ease of use and capability.
Most mainstream IDEs~\cite{IntelliJ} only support string match or structural search of restrictive grammatical constructs
although complex user interactions are not required.
Besides, static program analyzers, such as Datalog-based program analyzers~\cite{SmaragdakisB10, naik2011chord, AvgustinovMJS16},
provide deep program facts for users to explore advanced patterns,
while users have to customize the analyzers to meet their needs~\cite{ChristakisB16}.
For example, if users want to explore code patterns with the Datalog-based program analyzer \CodeQL~\cite{AvgustinovMJS16},
they have to learn the query language
to access the derived relational representation.
However, there always exists a non-trivial gap between a user's search intent and a customized query.
A large number of complex relations make query writing involve strenuous efforts, especially in formalizing search intents and debugging queries, which hinders the usability of \CodeQL\ for code search.

\smallskip
\emph{\textbf{Our Goal.}}
We aim to propose a query synthesizer to unleash the power of a Datalog-based program analyzer
for code search.
To show the search intent,
a user can specify a synthesis specification consisting of positive examples, negative examples, 
and a natural language description.
Specifically,
positive and negative examples indicate desired and non-desired grammatical constructs, respectively,
while the natural language description shows the search intent by a sentence.
Our synthesizer is expected to generate a query separating positive examples from negative ones,
which can support code search in the editor.
In this work, we focus on conjunctive queries,
which have been recognized as queries of an important form to support search tasks~\cite{GottlobKS06}.


\smallskip
Consider a usage scenario: 
Find all the methods receiving a parameter with \textsf{Log4jUtils} type and giving a return with \textsf{CacheConfig} type.
The user can provide the synthesis specification shown in Figure~\ref{fig:formulation_ex2}(a).
With the relational representation in Figure~\ref{fig:formulation_ex2}(b) derived from the examples,
our synthesizer would synthesize the conjunctive query in Figure~\ref{fig:formulation_ex2}(c) to express the search intent.
In particular, our synthesis specification is easy to provide.
The users can often copy desired grammatical constructs from an editor
as positive examples~\cite{NaikMSWNR21}
and then mutate them to form negative ones.
Meanwhile, they can express their need to search code with a brief sentence as the description.
Thus, an effective and efficient synthesizer 
enables the users to express the search intent from a high-level perspective,
serving as a user-friendly interface for code search.

\begin{figure}[t]
	\centering
	\includegraphics[width=0.85\linewidth]{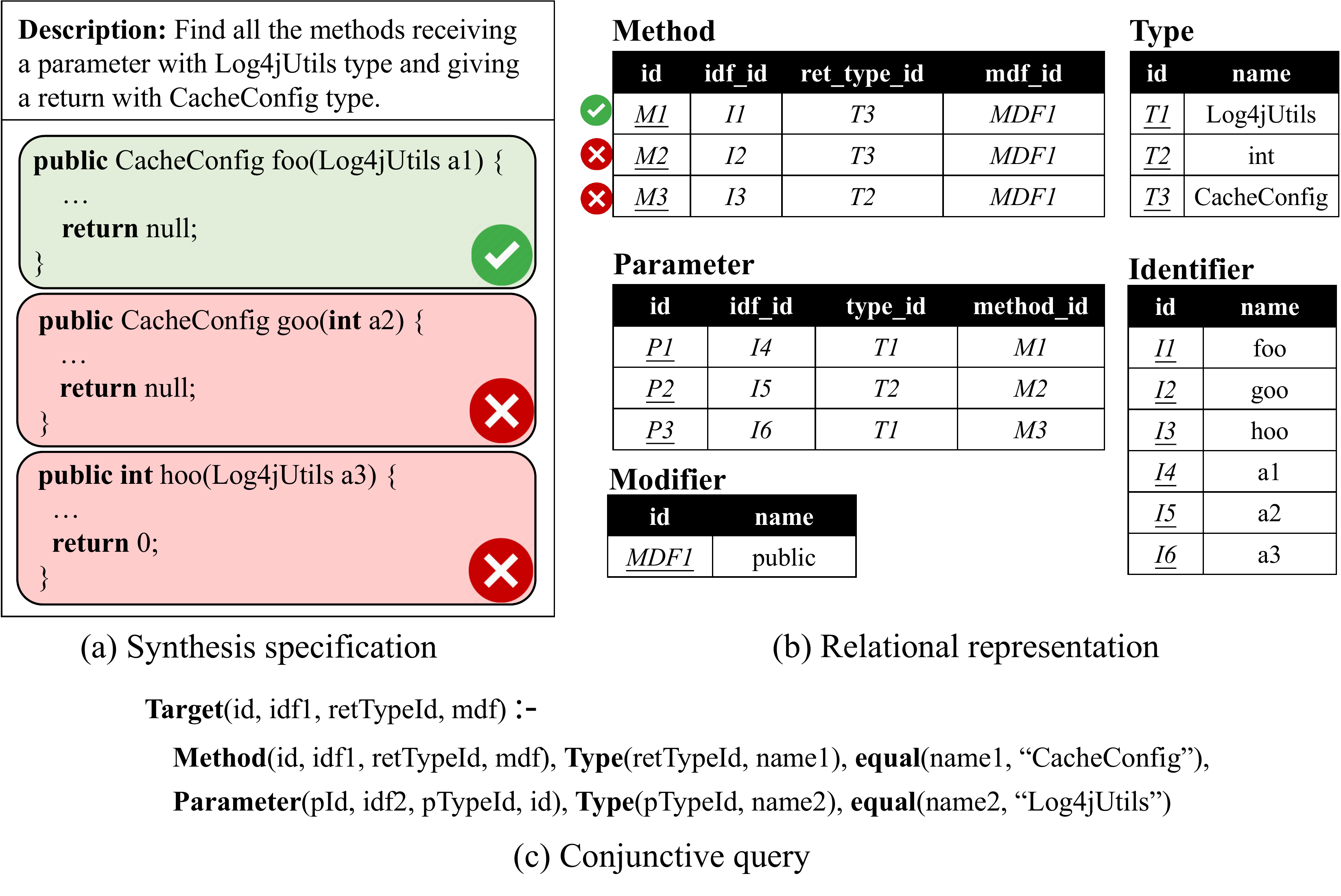}
	\vspace{-3mm}
	\caption{A motivating example\protect\footnotemark}   
	\label{fig:formulation_ex2}
	\vspace{-5mm}
\end{figure}
\footnotetext{We show five relations as examples, while a Datalog-based analyzer can derive over a hundred relations.}



\smallskip
\emph{\textbf{Challenges.}}
Nevertheless,
it is far from trivial to synthesize a conjunctive query for code search.
First, a Datalog-based program analyzer can generate many relations with multiple attributes
as the relational representation.
For example, \CodeQL\ exposes over a hundred relations to users for query writing~\cite{CodeQL}.
The various choices of selecting relations and enforcing conditions on attributes
induce a dramatically huge search space in the synthesis,
which can involve both the comparisons between attributes and string constraints,
posing a significant challenge to achieving high efficiency.
Second, there often exist multiple query candidates that separate positive examples from negative ones,
while several candidates can suffer from the over-fitting problem,
failing to express the search intent with no bias~\cite{XiongW22}.
An ineffective query candidate selection would mislead the synthesizer into returning wrong queries and 
further cause the failure of code search.

\smallskip
\emph{\textbf{Existing Effort.}}
There are three major lines of existing effort.
The first line of the studies utilizes input-output examples to synthesize queries in various forms, such as analytic SQL queries~\cite{ZhouBCW22, FengMGDC17} and relational queries~\cite{SiLZAKN18, ThakkarNSANR21}.
Although the queries often have expressive syntax, the synthesizers only take a few relations as input, not facing hundreds of relations as ours.
The second line of approaches is the component-based synthesis technique~\cite{FengM0DR17, PerelmanGBG12}, which leverages type signatures to enumerate well-typed programs.
However, a significant number of comparable attribute pairs still induce an explosively huge search space even if we adopt the techniques by guiding the search with the schema.
The third line of the studies derives program sketches from natural language descriptions via semantic parsing~\cite{LeiLBR13} and prioritizes feasible solutions with probability models~\cite{Yaghmazadeh0DD17, BaikJCJ20}.
Unfortunately, the ambiguity of natural languages and the inadequacy of the training process can make a semantic parser ineffective and eventually miss optimal solutions~\cite{RazaGM15}.
It is also worth noting that existing techniques do not attempt to select a feasible solution that maximizes or minimizes a specific metric.
Although several inductive logic learning-based techniques adopt heuristic priority functions
to accelerate the synthesis~\cite{ThakkarNSANR21},
they do not guarantee the optimality of the synthesized queries,
and thus can not resolve the query candidate selection in our problem.

\smallskip
\emph{\textbf{Our Solution.}}
Our key idea comes from three critical observations.
First, only a few relations contribute to separating positive examples from negative ones.
For example, the methods in Figure~\ref{fig:formulation_ex2}(a) have the same modifier,
indicating that the relation \textsf{Modifier} is unnecessary.
Second, adding an attribute comparison expression or a string constraint
to the condition of a conjunctive query yields
a stronger restriction on grammatical constructs.
If a query misses a positive example,
we cannot obtain a query candidate by strengthening the query.
Third, a desired query tends to constrain grammatical constructs
mentioned in the natural language description sufficiently.
For the instance in Figure~\ref{fig:formulation_ex2},
the query extracting the methods with the return type \textsf{CacheConfig} is a query candidate but not a desired one,
as it does not pose any restriction on parameters.

Based on the observations, we realize that it is possible to narrow down necessary relations and avoid their infeasible compositions
to prune the search space,
and meanwhile, select query candidates with the guidance of the natural language description.
According to the insight, we present a multi-modal synthesis algorithm \ToolName\ with three stages:
\begin{itemize}[leftmargin=*]
\item To narrow down the relations, we introduce the notion of the \emph{dummy relations} to depict the relations unnecessary for the synthesis and propose the \emph{representation reduction} to exclude dummy relations, which prunes the search space effectively.

\item To avoid infeasible compositions of relations, we perform the \emph{bounded refinement} to enumerate the queries,
skipping the unnecessary search for the queries that exclude a positive example.
Particularly, the string constraints are synthesized by computing the longest common substrings,
which is achieved efficiently in the refinement.

\item To select desired queries, we establish the dual quantitative metrics, namely \emph{named entity coverage} and \emph{structural complexity}, and select query candidates by optimizing them as the objectives, which creates more opportunities for returning desired queries.
\end{itemize}

We implement \ToolName\ and evaluate it on \CaseNumber\ code search tasks.
It successfully synthesizes desired queries for each task
in 2.56 seconds on average.
Besides, the representation reduction and the bounded refinement are crucial to its efficiency.
Skipping either of them would increase the average time cost to around 8 seconds, and several tasks cannot be finished within one minute.
Meanwhile, dual quantitative metrics play critical roles in the selection.
Applying only one metric would make 12 or 7 tasks fail due to the synthesized non-desired queries.
We also state and prove the soundness, completeness, and optimality of \ToolName.
If there exist query candidates for a given synthesis specification,
\ToolName\ always returns query candidates optimizing two proposed metrics to express the search intent.
To summarize, our work makes the following key contributions:
\begin{itemize}[leftmargin=*]
\item We propose a multi-modal conjunctive query synthesis problem. An effective and efficient solution can serve as a user-friendly interface of a Datalog-based analyzer for code search.

\item We design an efficient algorithm \ToolName\ for an instance of our synthesis problem, which automates the code search tasks in real-world scenarios.

\item We implement \ToolName\ as a tool and evaluate it upon \CaseNumber\ code search tasks, showing that \ToolName\ synthesizes the desired queries successfully and efficiently.
\end{itemize}

\section{Overview}

This section demonstrates a motivating example and briefs the key idea of our approach.

\subsection{Motivating Example}
Suppose a developer wants to avoid the security issue caused by \textsf{log4j} library~\cite{Log4j}.
He or she may examine the methods that receive a \textsf{Log4jUtils} object as a parameter and return a \textsf{CacheConfig} object.
One choice is to leverage the built-in search tools of the IDEs to search the code lines containing \textsf{Log4jUtils} or \textsf{CacheConfig},
while the string matching-based search cannot filter grammatical constructs according to their kinds.
Although several IDEs enable the structural search~\cite{IntelliJ},
their non-extensible templates only support searching for grammatical constructs of restrictive kinds.
Another alternative is to write a query depicting the target pattern and evaluate it with a Datalog-based program analyzer, such as \textsc{CodeQL}~\cite{AvgustinovMJS16}.
However, it not only involves great laborious efforts in query language learning but also creates the burden of query writing and debugging.

\smallskip
To improve the usability and capability of code search,
we aim to synthesize a query for a Datalog-based program analyzer.
As shown in Figure~\ref{fig:formulation_ex2}(a), a user can specify the synthesis specification to indicate the search intent.
Specifically, the positive and negative examples show the desired and undesired grammatical constructs, respectively,
while the natural language description demonstrates the search intent in a sentence.
Based on a Datalog-based program analyzer, 
we can convert the examples to a set of relations as the relational representation
along with positive and negative tuples,
which are shown in Figure~\ref{fig:formulation_ex2}(b).
For example, the first tuple in the relation \textsf{Method} is the positive tuple
indicating the method \textsf{foo} in Figure~\ref{fig:formulation_ex2}(a), which is a positive example.
If we automatically synthesize the conjunctive query in Figure~\ref{fig:formulation_ex2}(c),
the user does not need to delve into the relations and, instead 
specifies the synthesis specification from a high-level perspective.

\subsection{Synthesizing Conjunctive Queries}
The query synthesizer should effectively generate the desired queries
that express the search intent correctly.
However, it is stunningly challenging to obtain an effective and efficient synthesizer.
First, we have to tackle a great number of the relations and their attributes when we choose relevant relations and enforce correct constraints upon them,
which can involve both comparisons over attributes and string constraints.
Second, the non-uniqueness of query candidates
creates the obstacle of selecting proper candidates.
Any improper selection would return a non-desired query, leading to code search failure.
To address the challenges,
we propose a new multi-modal synthesis algorithm \ToolName.
As shown in Figure~\ref{fig:workflow},
\ToolName\ consists of three phases, which come from the following three ideas.

\begin{figure}[t]
	\centering
	\includegraphics[width=0.75\linewidth]{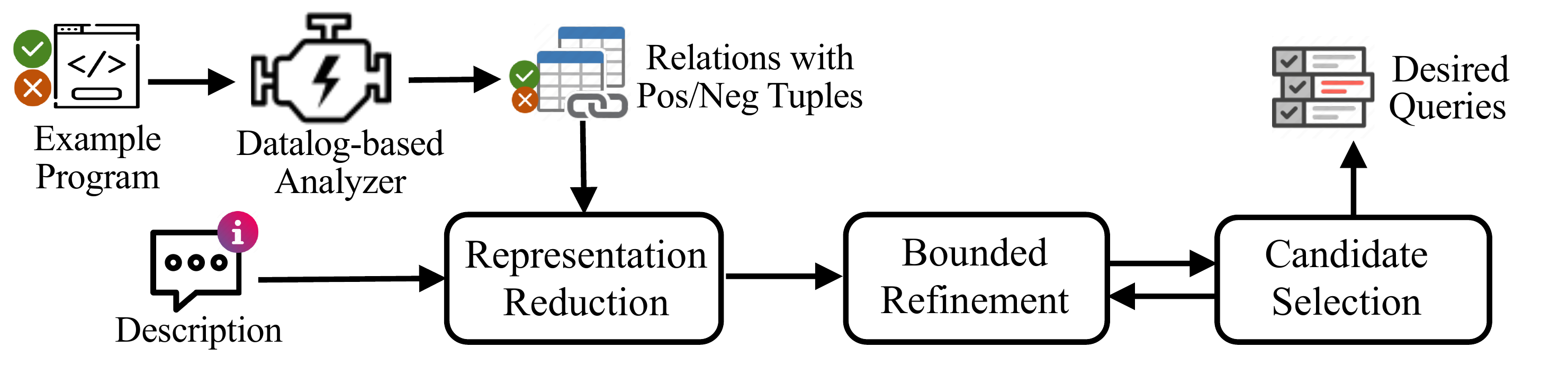}
   \vspace{-2mm}
	\caption{The overview of \ToolName}   
	\label{fig:workflow}
	\vspace{-5mm}
\end{figure}

\smallskip
\emph{\textbf{Idea 1: Removing dummy relations.}} 
Although there are many relations potentially used in the synthesis,
we can identify a class of relations, named \emph{dummy relations}, as unnecessary ones and then discard them safely.
Specifically, a relation is dummy if it cannot separate a positive tuple from a negative one.
As an example, the methods in Figure~\ref{fig:formulation_ex2}(a) have the same modifier, 
which is shown by the same values of the foreign keys \textsf{mdf\_id} of the relation \textsf{Method} in Figure~\ref{fig:formulation_ex2}(b).
This indicates that the relation \textsf{Modifier} has no impact on excluding negative tuples and thus can be discarded to prune the search space.
Based on this insight, we propose the \textbf{representation reduction} to remove the dummy relations,
narrowing down the necessary relations for the synthesis.

\smallskip
\emph{\textbf{Idea 2: Enumerating query candidates via refinement.}}
According to the query syntax,
the constraints, including attribute comparisons and string constraints,
pose restrictions on grammatical constructs.
This implies that
we cannot obtain a query candidate by refining the query that excludes a positive tuple.
In Figure~\ref{fig:formulation_ex2},
we may obtain a query that enforces both the parameter and the return value of a method have the same type.
Obviously, the query excludes the method \textsf{foo}, which is a positive tuple,
and thus, we should stop strengthening the restrictions on grammatical constructs.
Based on this insight, we introduce the technique of the \textbf{bounded refinement},
which adds conditions for the query enumeration and discards the queries that exclude any positive tuples.
Thus, we can avoid enumerating infeasible compositions of relations,
which further prunes the search space effectively.

\smallskip
\emph{\textbf{Idea 3: Dual quantitative metrics for selection.}}
Desired queries not only separate positive tuples from negative ones but also tend to cover as many program-related named entities as possible. 
In Figure~\ref{fig:formulation_ex2}, we may obtain a query candidate that restricts the return type to be a \textsf{CacheConfig} object.
However, it does not pose any restriction on the parameters and leaves the named entity \text{``}parameter\text{''} uncovered,
showing that it does not express the intent sufficiently.
Meanwhile, Occam's razor~\cite{BlumerEHW87} implies that desired queries should be as simple as possible.
Hence, we introduce the \emph{named entity coverage} and the \emph{structural complexity}
as the dual quantitative metrics,
and perform the \textbf{candidate selection} to identify desired queries
by optimizing the metrics.
Finally, we blend the selection with the refinement and terminate the enumeration when unexplored candidates cannot be better than the current ones, 
further avoiding the unnecessary enumerative search.
\section{Problem Formulation}
This section first presents the program relational representation (\cref{subsec:prr}) 
and then introduces the conjunctive queries for code search (\cref{subsec:cq}).
Lastly, we state the multi-modal conjunctive query synthesis problem
and brief the roadmap of technical sections (\cref{subsec:ps}).

\subsection{Program Relational Representation}
\label{subsec:prr}


First of all, we formally define the concept of the \emph{relation} as the preliminary.

\begin{newdefinition}(Relation)
	\label{def:relation}
	A relation $R(a_1, \cdots, a_n)$ is a set of tuples $(t_1, \cdots, t_n)$, where $n$ is the arity of $R$.
	For each $1 \leq i \leq n$, $a_i$ is the attribute of the relation.
\end{newdefinition}

A relation is structured data that stores the details of different aspects of an entity.
Concretely, a Datalog-based program analyzer encodes the program properties with a set of relations in a specific schema~\cite{AvgustinovMJS16, NaikMSWNR21, PashakhanlooN0D22}.
In what follows,
we define the \emph{relational representation} of a program.

\begin{newdefinition}(Relational Representation)
	\label{def:pf}
	Given a program, its relational representation $\mathcal{R}$ is a set of relations over the following schema $\Gamma$.
	Specifically, $\Gamma$ maps a relation symbol $R$ to an $n$-tuple of pairs, where each element in the tuple $\Gamma(R)$ has one of the following form:
	\begin{itemize}[leftmargin=*]
		\item $(id, R)$: The attribute $id$ is the primary key of the relation $R$.
		\item $(a, R')$: The foreign key $a$ in the relation $R$, referencing the primary key of $R' \in dom(\Gamma)$.
		\item $(a, \textsf{STR})$: The attribute $a$ has a string value indicating the textual information.
	\end{itemize}
	Particularly, we say $\Gamma$ as the language schema.
	Without introducing the ambiguity, we use $(a, \cdot) \in \Gamma(R)$ to indicate that $(a, \cdot)$ is an element of the tuple $\Gamma(R)$.
\end{newdefinition}

\begin{newexample}
Figure~\ref{fig:formulation_ex2}(b) shows five relations as examples,
where the values of the primary keys are italic and underlined, and the foreign keys are italic.	
Based on the first tuple in the relation \textsf{Method}, we can track the identifier, the return type, and the modifier of the method \textsf{foo} based on the foreign keys.
Similarly, we can identify the identifier and the type of each parameter based on the relation \textsf{Parameter}.
\end{newexample}

Essentially, the relational representation of a program encodes the program properties
with relations,
which depicts the relationship of grammatical constructs in the program.
In reality, various relations can be derived with Datalog-based program analyzers,
such as \textsf{ReferenceType} and \textsf{VarPointsTo} provided by \textsc{Doop}~\cite{SmaragdakisB10},
depicting the type information and points-to facts, respectively.
Due to the space limit, we only show five relations in Figure~\ref{fig:formulation_ex2}(b).


\subsection{Conjunctive Queries}
\label{subsec:cq}
To simplify the presentation,
we formulate the conjunctive queries as the relational algebra expressions~\cite{AbiteboulHV95} in the rest of the paper.
In what follows, we first brief several relational algebra operations
and then introduce the conjunctive queries for code search.

\begin{newdefinition}(Relational Algebra Operations)
	In a relational algebra, the selection, projection, Cartesian product, and rename operations are defined as follows:
	\begin{itemize}[leftmargin=*]
		\item $\sigma_{\Theta}(R):=\{ (t_1, \ldots, t_n) \in R \ | \ \ [a_i \mapsto t_i \ | \ 1 \leq i \leq n] \models \Theta \}$ is 
		the selection of a relation $R$ with the selection condition $\Theta$ over its attributes.
		\item $\Pi_{\mathbf{a'}}(R):=\{ (\mathbf{t}.a_1', \dots, \mathbf{t}.a_k')  \ | \  \mathbf{t} \in R  \}$ is
		the projection of a relation $R$ upon a tuple of attributes $\mathbf{a'}$, denoted by $\Pi_{\mathbf{a'}}(R)$.
		Here $\mathbf{a'} = (a_1', \dots, a_k')$. $\mathbf{t}.a$ is the value of the attribute $a$ in the tuple $\mathbf{t}$.
		\item $R_1 \times R_2 := \{ (t_1^{1}, \ldots, t_{n_1}^{1}, t_1^{2}, \ldots, t_{n_2}^{2}) \ | \  (t_1^{1}, \ldots, t_{n_1}^{1}) \in R_1, (t_1^{2}, \ldots, t_{n_2}^{2}) \in R_2 \}$ is
		the Cartesian product of two relations $R_1$ and $R_2$.
		\item The rename of a relation $R$, denoted by $\rho_{A}(R)$, yields the same relation named $A$.
	\end{itemize}
\end{newdefinition}

The relational algebra operations enable us to manipulate the relational representation to search desired grammatical constructs.
Specifically, we often need to search specific grammatical constructs via string match and enforce them to satisfy several constraints simultaneously.
Now we formalize the conjunctive queries
to express the code search intent.

\begin{newdefinition}(Conjunctive Query)
	\label{def:cqcq}
	Given the relational representation $\mathcal{R}$,
	a conjunctive query $R_Q$ is a relational algebra expression of the form
	$\Pi_{(A_i.*)} (\sigma_{\Theta} (\rho_{A_1}(R_1) \times \ldots \times \rho_{A_m}(R_m)))$,
	where $R_i \in \mathcal{R}$, $\Theta := \phi_1 \land \cdots \land \phi_n$, 
	and each $\phi_i (1 \leq i \leq n)$ occurring in the selection condition $\Theta$ is an atomic condition in the following two forms:
	\begin{itemize}[leftmargin=*]
		\item An atomic equality formula $A_j.\textsf{id} = A_k.a$,
		where $a$ is the foreign key and $(a, R_j) \in \Gamma(R_k)$.
		\item A string constraint $p(A_k.a, \ell)$ over the string attribute $A_k.a$, 
		where $\ell$ is a string literal, and $p \in \{\textsf{equal}, \textsf{suffix}, \textsf{prefix}, \textsf{contain}\}$.
	\end{itemize}
	In particular, $\Pi_{(A_i.*)}$ indicates the projection upon all the attributes of the relation $A_i$.
\end{newdefinition}


The form of the conjunctive queries in Definition~\ref{def:cqcq} depicts the user intent
from two aspects.
First,
the atomic equality formulas encode the relationship between the grammatical constructs.
Second,
the four string predicates support the common scenarios of string matching-based code search.
We do not focus on synthesizing more expressive string constraints,
which is the orthogonal direction of program synthesis~\cite{Chen0YDD20, LeeSO16, PanHXD19}.
Based on the conjunctive queries,
we can simultaneously perform the string matching-based search and
filter the constructs with various relations in the program relational representation.


\begin{newexample}
	\label{ex:cq}
	We can formalize the query in Figure~\ref{fig:formulation_ex2}(c) as the relational algebra expression:
	$$\Pi_{(A_1.*)} (\sigma_{\Theta} (\rho_{A_1}(\textsf{Method}) \times \rho_{A_2}(\textsf{Type}) \times \rho_{A_3}(\textsf{Parameter}) \times \rho_{A_4}(\textsf{Type})))$$
	where the selection condition $\Theta$ in the selection operation is as follows:
	\begin{align*}
		\Theta := &(A_1.\textsf{id} = A_3.\textsf{method\_id}) \land (A_1.\textsf{ret\_type\_id} = A_2.\textsf{id}) \land (A_3.\textsf{type\_id} = A_4.\textsf{id}) \land \\
		& \textsf{equal}(A_2.\textsf{name}, ``\textsf{CacheConfig}\text{''}) \land \textsf{equal}(A_4.\textsf{name}, ``\textsf{Log4jUtils}\text{''})
	\end{align*}
\end{newexample}

The conjunctive queries can be instantiated with various flavors~\cite{GottlobKS06, AbiteboulHV95, ChandraM77}.
The \emph{select-from-where queries} are the instantiations of the conjunctive queries in SQL.
Besides, a simple Datalog program can also express the conjunctive query with a single Datalog rule.
In our paper, we formulate a conjunctive query as a relational algebra expression.
Our implementation actually synthesize the conjunctive queries as Datalog programs,
which are evaluated by a Datalog solver over the program relational representation for code search.

\subsection{Multi-modal Conjunctive Query Synthesis Problem}
\label{subsec:ps}

To generate the conjunctive query for code search,
the users need to specify their intent as the specification.
In our work, we follow the premise of the recent studies on the multi-modal program synthesis~\cite{Chen0YDD20, BaikJCJ20}
that multiple modalities of information
can go arm in arm with each other, serving as the informative specification for the synthesis. 
Specifically, the users provide a natural language sentence
to describe the target code pattern and
provide several grammatical constructs as positive and negative examples.
Notably, the multi-modal synthesis specification is easy to provide.
When the users want to explore a specific code pattern,
they can describe the pattern briefly in a natural language
and provide several examples from their editors instead of
delving into the details of the underlying relations and their attributes, 
enabling users to generate a query for code search in a declarative manner.

Based on the multi-modal synthesis specification,
the positive and negative examples 
are converted to the tuples in a specific relation.
For example, Figure~\ref{fig:formulation_ex2} shows a set of relations as the relational representation of 
the examples.
To formalize our problem better,
we define the notion of the \emph{relation partition} as follows.

\begin{newdefinition}(Relation Partition)
The relation partition of $R^{*} \in \mathcal{R}$ is a pair of two relations $(R_p^{*}, R_n^{*})$ satisfying
$R^{*} = R_p^{*} \cup R_n^{*}$ and $R_p^{*} \cap R_n^{*} = \emptyset$.
The relation partition is non-trivial if and only if $R_p^{*} \neq \emptyset$ and $R_n^{*} \neq \emptyset$.
We say the tuples in $R_p^{*}$ and $R_n^{*}$ are positive and negative tuples, respectively.
\end{newdefinition}

\begin{newexample}
	\label{ex:5}
	As shown in Figure~\ref{fig:formulation_ex2}, 
	we can construct the relation partition $(R_p^{*}, R_n^{*})$, where $R_p^{*}=\{ (\textsf{M1}, \ \textsf{I1}, \ \textsf{T3}, \ \textsf{MDF1}) \}$
	and $R_n^{*} = \{ (\textsf{M2}, \ \textsf{I2}, \ \textsf{T3}, \ \textsf{MDF1}), (\textsf{M3}, \ \textsf{I3}, \ \textsf{T2}, \ \textsf{MDF1}) \}$.
	Obviously, $R_p^{*}$ and $R_n^{*}$ are disjoint,
	 and $R_p^{*} \cup R_n^{*}$ is exactly the relation $\textsf{Method}$.
\end{newexample}

The positive and negative tuples essentially depict the positive and negative examples, respectively.
Based on the program's relational representation,
the examples specified by the users can determine the positive and negative tuples,
which can be achieved in various manners.
Specifically, users can select a grammatical construct in the IDEs  or use a code sample in a specific coding standard~\cite{wilson2000java} as a positive example
and remove several sub-patterns from a positive example by mutation to construct negative ones.
Such positive and negative examples further constitute a sample code snippet, 
from which a Datalog-based analyzer derives a set of relations as the relational representation.
In this paper, we omit the details of positive/negative tuple generation
and formulate a \emph{multi-modal conjunctive query synthesis (MMCQS) problem} as follows.

\mybox{
	Given a relational representation $\mathcal{R}$, 
	a relation partition $(R_p^{*}, R_n^{*})$ of $R^{*} \in \mathcal{R}$,
	and a natural language description $s$,
	we aim to synthesize a conjunctive query $R_Q$ containing the positive tuples in $R_p^{*}$ and excluding the negative tuples in $R_n^{*}$.
}

\begin{newexample}
	\label{ex:problem}
	Figure~\ref{fig:formulation_ex2}(a) shows the multi-modal synthesis specification,
	which consists of a positive example, two negative examples,
	and a natural language description $s$
	as \text{``}Find all the methods receiving a \textsf{Log4jUtils}-type parameter and giving a \textsf{CacheConfig}-type return\text{''}.
	Leveraging a Datalog-based analyzer,
	we obtain the relational representation and the relation partition of \textsf{Method},
	which are shown in Figure~\ref{fig:formulation_ex2}(b).
	To automate the code search,
	we expect to synthesize the query in Fig~\ref{fig:formulation_ex2}(c) or Example~\ref{ex:cq}
	according to the relational representation, the relation partition, and the natural language sentence.
\end{newexample}

To promote the code search,
we propose an efficient synthesis algorithm \ToolName\ for the MMCQS problem, which is our main technical contribution.
As explained in~\cref{sec:intro},
it is challenging to solve the MMCQS problem efficiently,
which involves mitigating the huge search space and selecting queries from multiple candidates. 
In the following two sections,
we formalize the conjunctive query synthesis from a graph perspective (\cref{sec:sg}),
and illustrate the technical details of our synthesis algorithm \ToolName\ (\cref{sec:syn}),
which prunes the search space and selects desired queries effectively.

\section{Conjunctive Query Synthesis: A Graph Perspective}
\label{sec:sg}

This section presents a graph perspective of our conjunctive query synthesis problem.
Specifically, we introduce two graph representations of the language schema and the conjunctive queries, named the schema graph (\cref{subsec:sg}) and the query graph (\cref{subsec:gs}), respectively, which reduces the conjunctive query synthesis to the query graph enumeration.
Lastly, we summarize the section and highlight the technical challenges from a graph perspective (\cref{subsec:sgsummary}).


\subsection{Schema Graph}
\label{subsec:sg}
According to Definition~\ref{def:pf},
a relation in the language schema has three kinds of attributes,
namely a unique primary key, foreign keys, and string attributes.
Obviously, 
the selection condition $\Theta$ in $R_Q$ should only compare the foreign key of a relation with its referenced primary key or constrain the string attributes with string predicates.
To depict the possible ways of constraining the attributes,
we define the concept of the \emph{schema graph} as follows.

\begin{newdefinition}(Schema Graph)
	\label{def:sg}
	The schema graph $G_{\Gamma}$ of a language schema $\Gamma$ is $(N_{\Gamma}, E_{\Gamma})$:
	\begin{itemize}[leftmargin=*]
		\item The set $N_{\Gamma}$ contains the relation symbols in the schema or the string type \textsf{STR} as the nodes of the schema graph, i.e., $N_{\Gamma}:=dom(\Gamma) \cup \{ \textsf{STR} \}$.
		\item The set $E_{\Gamma}$ contains an edge $(n_1, n_2, a)$ if and only if either of the conditions holds:
		\begin{itemize}[leftmargin=*]
			\item $n_1, n_2 \in dom(\Gamma)$ and $(a, n_2)\in \Gamma(n_1)$: The relation $n_1$ has a foreign key named $a$ referencing the primary key of the relation $n_2$.
			\item $n_1 \in dom(\Gamma)$ and $(a, \textsf{STR}) \in \Gamma(n_1)$: $a$ is the string attribute of the relation $n_1$.
		\end{itemize}
	\end{itemize}
\end{newdefinition}

\begin{newexample}
	Consider the relations in the relational representation shown in Figure~\ref{fig:formulation_ex2}(b).
	We can construct the schema graph in Figure~\ref{fig:schema_graph}(a).
	The edge from \textsf{Method} to \textsf{Modifier} labeled with \textsf{mdf\_id}
	shows that the attribute \textsf{mdf\_id} of \textsf{Method} is a foreign key referencing the primary key of \textsf{Modifier}.
	Similarly, the edge from \textsf{Type} to \textsf{STR} labeled with \textsf{name} shows that the attribute \textsf{name} in \textsf{Type} is a string attribute.
\end{newexample}

Noting that there can exist multiple edges with different labels between two nodes in the schema graph,
which indicate that a relation take multiple foreign keys referencing the same relation or string attributes as the attributes.
Essentially, the schema graph encodes the available relations with its node set and 
depicts the valid forms of the atomic formulas appearing in the selection condition with its edge set.
Although we can compare the attributes of any relations flexibly,
a solution to our problem must take the valid form of the atomic formulas as its selection condition,
comparing the foreign keys with the referenced primary keys or examining the string attributes of the relations appearing in a Cartesian product.

\subsection{Query Graph}
\label{subsec:gs}

As formulated in Definition~\ref{def:cqcq},
there are two key components in the conjunctive query,
namely the Cartesian product and the selection condition.
Leveraging the schema graph,
we can represent the components with nodes and edges on the graph,
which uniquely determines a conjunctive query.
Formally, we introduce the notion of the \emph{query graph} as follows.

\begin{newdefinition}(Query Graph)
\label{def:qg}
Given a conjunctive query $Q$, its query graph $G_Q$ is $(N_Q, E_Q, \Phi_Q)$:
\begin{itemize}[leftmargin=*]
\item The set $N_Q$ contains $(R_i, A_i)$ or $(p, \ell)$ as a node in the query graph.
$R_i \in \mathcal{R}$ is a relation and $A_i$ is the unique relation identifier.
$p$ and $\ell$ are the string predicate and literal, respectively.
\item The set $E_Q$ contains $(n_1, n_2, a)$ as an edge,
corresponding to the equality atomic formula $A_j.a = A_k.\textsf{id}$ in the selection condition,
where $n_1 = (R_j, A_j)$ and $n_2 = (R_k, A_k)$.
\item The mapping $\Phi_Q$ maps a 3-tuple $(R_j, A_j, a)$ to a node $(p, \ell)$,
indicating the edge from $(R_j, A_j)$ to $(p, \ell)$ with the label $a$,
which corresponds to the string constraint $p(A_j.a, \ell)$.
\end{itemize}
\end{newdefinition}

\begin{figure}[t]
	\centering
	\includegraphics[width=0.9\linewidth]{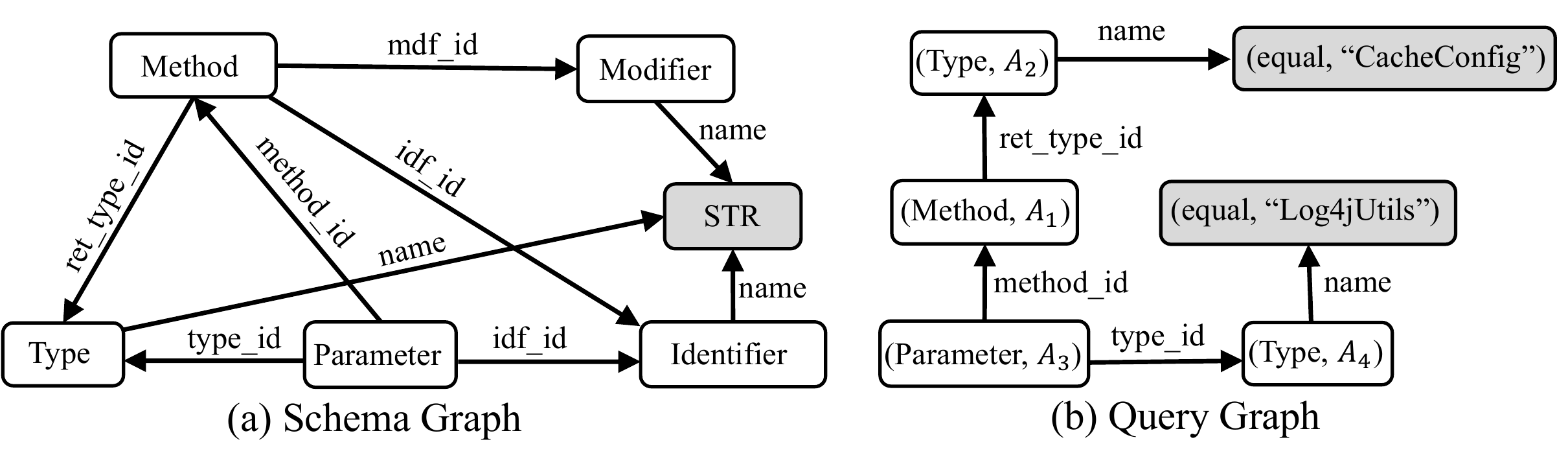}
	\vspace{-3mm}
	\caption{The examples of the schema graph and the query graph
	}   
	\vspace{-3mm}
	\label{fig:schema_graph}
\end{figure}

\begin{newexample}	
	Figure~\ref{fig:schema_graph}(b) shows an query graph of $R_Q$ in Example~\ref{ex:cq}.
	The white nodes show the four relations appearing in the Cartesian product,
	while the gray nodes indicate the string predicates and literals in the selection condition.
	The edges depict two kinds of atomic conditions.
	For example, the edge from $(\textsf{Parameter}, A_3)$ to $(\textsf{Method}, A_1)$ labeled with $\textsf{method\_id}$
	indicates the equality constraint $A_3.\textsf{method\_id}=A_1.\textsf{id}$.
	Meanwhile, the edge induced by $\Phi_Q(\textsf{Type}, \ A_4, \ \textsf{name}) = (\textsf{equal}, ``\textsf{Log4jUtils}\text{''})$
	indicates the string constraint $\textsf{equal}(A_4.\textsf{name}, ``\textsf{Log4jUtils}\text{''})$.
\end{newexample}

Essentially, a conjunctive query and a query graph are allotropes.
That is, there exists a bijection $\kappa$ mapping a conjunctive query $R_Q$ to a query graph $G_Q$ such that $G_Q = \kappa  (R_Q)$ and $R_Q = \kappa^{-1}(G_Q)$.
Thus, we can enumerate the conjunctive queries by enumerating the query graphs.
Besides, the schema graph restricts the form of the selection condition over the attributes.
If an edge with the label $a$ connects $(R_j, A_j)$ and $(R_k, A_k)$ in a query graph,
there should exist an edge labeled with $a$ connecting the relations $R_j$ and $R_k$ in the schema graph.
A similar argument also holds for the edge of the query graph indicated by the mapping $\Phi_{Q}$.
Therefore, we can reduce our conjunctive query synthesis to a search problem,
of which the search space is characterized as the set of query graphs.

\subsection{Summary}
\label{subsec:sgsummary}
Leveraging the schema graph, we have reduced the conjunctive query synthesis process to query graph enumeration. 
To obtain the desired queries, we only need to select the nodes and edges from the schema graph and create the node $(p, \ell)$ with proper string predicates and literals for constructing a query graph $G_Q$,
which should satisfy that the induced query $\kappa^{-1}(G_Q)$ separates positive tuples from negative ones.

Obtaining the desired queries with high efficiency is a non-trivial problem. The schema graph can be overwhelming, containing over a hundred nodes and edges, which leads to an enormous number of choices for relations occurring in the query graph. Even for a given set of relations, the flexibility of instantiating equality constraints and string constraints over their attributes can induce a large number of selection choices of edges in a query graph, which exacerbates the search space explosion problem. Additionally, the existence of multiple query candidates necessitates the effective selection of candidates, which is crucial for conducting a code search task. In the next section, we will detail our synthesis algorithm that addresses these challenges, resulting in high efficiency and effectiveness for code search.

\section{Synthesis Algorithm}
\label{sec:syn}

This section presents our synthesis algorithm \ToolName\ to solve the MMCQS problem.
\ToolName\ takes as input the relational representation $\mathcal{R}$ of an example program, a relation partition $(R_p^{*}, R_n^{*})$ of $R^{*} \in \mathcal{R}$, and a natural language description $s$.
It generates the query candidates separating positive tuples $\textbf{t}_p \in R_p^{*}$ from negative ones $\textbf{t}_n \in R_n^{*}$ ,
which can be further selected and then used for code search.
To address the challenges in~\cref{subsec:sgsummary},
\ToolName\ works with the following three stages:

\begin{itemize}[leftmargin=*]
	\item To tackle a large number of relations,
	\ToolName\ relies on the notion of dummy relations and conducts the representation reduction
	based on the positive and negative tuples,
	which effectively narrow down the relations that can appear in the query graph (\cref{sec:reduce}).
	
	\item To avoid unnecessary enumeration of edges in query graphs,
	we propose the bounded refinement by enumerating the query graphs based on the schema graph, 
	which essentially appends equality constraints and string constraints inductively (\cref{sec:refine}).
	
	\item To select query candidates, \ToolName\ identifies the named entities in the natural language description $s$ for the prioritization,
	and blends the selection with the refinement to collect the desired queries (\cref{sec:prior}).
\end{itemize}
We also formulate the soundness, completeness, and optimality of our algorithm (\cref{subsec:algsummary}).
For better illustration,
we use the synthesis instance shown in Figure~\ref{fig:formulation_ex2} throughout this section.

\subsection{Representation Reduction}
\label{sec:reduce}

To tackle the large search space,
we first propose the representation reduction to narrow down the relations possibly used in the query.
Specifically, we introduce the notion of the dummy relations to determine the characteristics of unnecessary relations (\cref{subsec:sndr}) and propose the algorithm of removing dummy relations for the representation reduction (\cref{subsec:dralg}).

\subsubsection{Dummy Relations}
\label{subsec:sndr}

There exists a class of relations, named \emph{dummy relations}, 
which cannot involve in distinguishing positive and negative tuples, such as the relation \textsf{Modifier} in Figure~\ref{fig:formulation_ex2}. 
Before defining them, we first introduce the \emph{undirected relation path} and the \emph{activated relation}.

\begin{newdefinition}(Undirected Relation Path)
	An undirected relation path from $R_{0}$ to $R_{k+1}$ in
	the schema graph $G_{\Gamma}=(N_{\Gamma}, E_{\Gamma})$
	 is $p: R_0 \hookrightarrow_{(a_0, d_0)} \cdots  \hookrightarrow_{(a_k, d_k)} R_{k + 1}$,
	where $R_i \in dom(\Gamma)$.
	Here, $d_i = 1$ if and only if $(R_i, R_{i+1}, a_i) \in E_{\Gamma}$,
	and $d_i = -1$ if and only if $(R_{i+1}, R_i, a_i) \in E_{\Gamma}$.
\end{newdefinition}

\begin{newdefinition}(Activated Relation)
\label{def:activated_relation}
Given a tuple $\mathbf{t}_0 \in R_{0}$ and
an undirected relation path $p: R_0 \hookrightarrow_{(a_0, d_0)} \cdots  \hookrightarrow_{(a_k, d_k)} R_{k + 1}$,
the activated relation of $\mathbf{t}_0$ along $p$ is 
$$\mathcal{I}(\mathbf{t}_0, p) = \{ \mathbf{t}_{k+1} \ | \ \mathbf{t}_{i+1} \in R_{i+1}, \ \textsf{ite}(d_i = 1, \ \textbf{t}_i.a_i = \mathbf{t}_{i+1}.\textsf{id}, \ \textbf{t}_i.id = \mathbf{t}_{i+1}.a_i), \ 0 \leq i \leq k \}$$
\end{newdefinition}
\begin{newexample}
	\label{ex:activated_relation}
	In in Figure~\ref{fig:formulation_ex2}, the path $p_1: \textsf{Method} \hookrightarrow_{(\textsf{method\_id}, -1)} \textsf{Parameter} \hookrightarrow_{(\textsf{type\_id}, \ 1)} 
	\textsf{Type}$,
	and $\mathbf{t}_p = (\textsf{M1}, \ \textsf{I1}, \ \textsf{T3}, \ \textsf{MDF1}) \in R_p^{*}$.
	We have 
	$\mathbf{t}_1 = (\textsf{P1}, \ \textsf{I4}, \ \textsf{T1}, \ \textsf{M1}) \in \textsf{Parameter}$, and $\mathbf{t}_2 = (\textsf{T1}, \textsf{Log4jUtils}) \in \textsf{Type}$,
	By inspecting other tuples, we have $\mathcal{I}(\mathbf{t}_p, p_1) = \{ (\textsf{T1}, \textsf{Log4jUtils}) \}$.
\end{newexample}

Intuitively, an undirected relation path can depict the restriction upon the relations in the Cartesian product of the query.
The activated relation actually contains the tuples 
enforcing the primary key of $\mathbf{t}_0$ to appear in a selected tuple.
Therefore, it is possible to narrow down the relations for the synthesis
by inspecting the activated relations of positive and negative tuples along each undirected relation path. According to the intuition, we formally introduce and define the notion of the \emph{dummy relations} as follows.

\begin{newdefinition}(Dummy Relation)
	\label{def:dummy_relation}
	Given a relation partition $(R_p^{*}, R_n^{*})$ of $R^{*} \in \mathcal{R}$,
	a relation $R \in \mathcal{R}$ is dummy if for every undirected relation path $p$ from $R^{*}$ to $R$, 
	either of the conditions is satisfied: 
	(1) There exists $\mathbf{t}_p \in R_p^{*}$ such that $\mathcal{I}(\mathbf{t}_p, p) = \emptyset$;
	(2) $\mathcal{I}(\mathbf{t}_p, p) = \mathcal{I}(\mathbf{t}_n , p)$ for any $\mathbf{t}_p \in R_p^{*}$ and  $\mathbf{t}_n \in R_n^{*}$.
\end{newdefinition}

Definition~\ref{def:dummy_relation} formalizes two characteristics of relations unnecessary  the synthesis.
First, the empty activated relation of a positive tuple $\mathbf{t}_p$ indicates the absence of the tuples in the relation $R$, making the tuple $\mathbf{t}_p$ appear.
Second, the relation $R$ cannot contribute to separating positive tuples from negative ones if several tuples in $R$ make all the positive and negative tuples appear simultaneously.
Thus, such relations can be discarded safely.

\begin{newexample}
	\label{ex:dummy_relation}
	Example~\ref{ex:activated_relation} shows that
	$\mathcal{I}(\mathbf{t}_p, p_1) \neq \emptyset$.
	Similarly,  $\mathcal{I}(\mathbf{t}_n, p_1) = \{ (\textsf{T2}, \textsf{int}) \} \neq \mathcal{I}(\mathbf{t}_p, p_1)$
	when $\mathbf{t}_n = (\textsf{M2}, \ \textsf{I2}, \ \textsf{T4}, \ \textsf{MDF1}) \in R_n^{*}$.
	Hence, $\textsf{Type}$ is not dummy.
	Besides, any undirected relation path $p_2$ from $\textsf{Method}$ to $\textsf{Modifier}$
	has the form
	$$\textsf{Method} [\hookrightarrow_{(\textsf{mdf\_id}, +1)} \textsf{Modifier} \hookrightarrow_{(\textsf{mdf\_id}, -1)} \textsf{Method}]^{*} \hookrightarrow_{(\textsf{mdf\_id}, +1)} \textsf{Modifier}$$
	where $[\cdot]^{*}$ indicates the repeated subpath.
	We find that
	$\mathcal{I}(\mathbf{t}_p, p_2) = \mathcal{I}(\mathbf{t}_n, p_2) = \{  \textsf{MDF1}, \textsf{public}   \}$
	for any $\mathbf{t}_p \in R_p^{*}$ and $\mathbf{t}_n \in R_n^{*}$.
	Thus, the relation $\textsf{Modifier}$ is a dummy relation.
\end{newexample}

\subsubsection{Removing Dummy Relations}
\label{subsec:dralg}

Based on Definition~\ref{def:dummy_relation},
identifying dummy relations involves two technical parts.
First, we should collect all the undirected relation paths from $R^{*}$ to each relation.
Second, we need to compute $\mathcal{I}(\mathbf{t}_p, p)$ and $\mathcal{I}(\mathbf{t}_n, p)$ 
for each undirected relation path $p$ and positive/negative tuple.
However, the schema graph can contain a large and even infinite number of undirected relation paths from $R^{*}$.
Any cycle induces the infinity of the path number,
making it tricky to examine the conditions in Definition~\ref{def:dummy_relation}.
Fortunately,
we realize that the number of cycles in a path does not affect the activated relation, 
which is stated in the following property.

\begin{nproperty}
	\label{property}
	Given any $\mathbf{t}_0 \in R_{\textsf{0}}$, we have $\mathcal{I}(\mathbf{t}_0, p) = \mathcal{I}(\mathbf{t}_0, p')$ for $p$ and $p'$ as follows:
	\begin{align*}
		p &: R_{\textsf{0}} \hookrightarrow_{(a_0, d_0)} \cdots R_l \ [\hookrightarrow_{(a_l', d_l')} \cdots R_{l+t} \hookrightarrow_{(a_{l+t}', d_{l+t}')}  R_l]^{+} \ \hookrightarrow_{(a_l, d_l)} \cdots R_\textsf{k}  \hookrightarrow_{(a_k, d_k)}  R_{\textsf{k+1}}\\
		p' &: R_{\textsf{0}} \hookrightarrow_{(a_0, d_0)} \cdots R_l \ \hookrightarrow_{(a_l', d_l')} \cdots R_{l+t} \hookrightarrow_{(a_{l+t}', d_{l+t}')}  R_l \ \hookrightarrow_{(a_l, d_l)} \cdots R_\textsf{k}  \hookrightarrow_{(a_k, d_k)}  R_{\textsf{k+1}}
	\end{align*}
	Here, $[\cdot]^{+}$ indicates the cycle occurring at least one time.
\end{nproperty}

Property~\ref{property} holds trivially according to Definition~\ref{def:activated_relation}.
For each undirected relation path $p$ containing a cycle,
the constraints over the tuples in $\mathcal{I}(\textbf{t}_0, p)$ are the same as the ones over the tuples in $\mathcal{I}(\textbf{t}_0, p')$.
Thus,
we can just examine the finite number of undirected relation paths in which a cycle appears at most one time.

\begin{center}
\SetAlFnt{\small}
\begin{algorithm}[t]
	\SetAlgoLined
	\SetAlgoVlined
	\SetVlineSkip{0pt}
	\SetKwIF{If}{ElseIf}{Else}{if}{:}{else if}{else}{end if}%
	\SetKwFor{While}{while}{do}{end while}%
	\SetKwFor{ForEach}{foreach}{:}{endfch}
	\SetKwFor{ForEachParallel}{foreach}{do in parallel}{endfch}
	\SetKwRepeat{Do}{do}{while}
	\SetCommentSty{redcommfont}
	\SetKwFunction{reduce}{reduce}
	\SetKwFunction{trans}{trans}
	\SetKwFunction{poll}{poll}
	
	\SetKw{Continue}{continue}
	\SetKw{Return}{return}
	\SetKw{Break}{break}
	\SetKw{Or}{or}
	\SetKw{And}{and}
	\SetKwFunction{schemaGraph}{SchemaGraph}
	\SetKwFunction{noCyclePath}{AcyclicPath}
	\SetKwFunction{cyclePath}{augmentPathWithCycle}
	\SetKwComment{Comment}{$\triangleright$\ }{}
	\newcommand\mycommfont[1]{\textcolor{violet}{#1}}
	\SetCommentSty{mycommfont}
	\SetKwProg{myproc}{Procedure}{:}{}
	\myproc{\reduce{$\Gamma$, $\mathcal{R}$, $R_p^{*}$, $R_n^{*}$}}{
		$\mathcal{R}' \leftarrow \emptyset$\; 
		$G_{\Gamma} \leftarrow \schemaGraph(\Gamma)$ \label{algline:graph}\;
		\ForEach(\label{algline:collectpath1}){$R \in \mathcal{R}$}{
				$\mathcal{P} \leftarrow \cyclePath(\noCyclePath(R^{*}, R, G_{\Gamma}))$\label{algline:collectpath2}\;
				\ForEach(\label{algline:excsecond1}){$p \in \mathcal{P}, \mathbf{t}_p \in R_p^{*}, \mathbf{t}_n \in R_n^{*}$}{
						\If{$\mathcal{I}(\mathbf{t}_p, p) \neq \mathcal{I}(\mathbf{t}_n, p)$}{
							$\mathcal{R}' \leftarrow \mathcal{R}' \cup \{ R \}$;  \Break\label{algline:excsecond2}\;
						}
				}
				\ForEach(\label{algline:excfirst1}){$p \in \mathcal{P}, \mathbf{t}_p \in R_p^{*}$}{
						\If{$\mathcal{I}(\mathbf{t}_p, p) =  \emptyset$}{
							$\mathcal{R}' \leftarrow \mathcal{R}' \setminus \{ R \}$; \Break\label{algline:excfirst2}\;
						}
				}
		}
		\Return $\mathcal{R}'$\label{algline:returnr};
	}
	\caption{Removing dummy relations for representation reduction}
	\label{alg:dummy}
\end{algorithm}
	\vspace{-9mm}
\end{center}

\smallskip
Establish upon the above concepts and property,
Algorithm~\ref{alg:dummy} shows the details of the representation reduction by removing dummy relations.
Initially, we construct the schema graph $G_{\Gamma}$ according to Definition~\ref{def:sg} (line~\ref{algline:graph}).
Then we compute the undirected relation paths from $R^{*}$ to $R$ in $G_{\Gamma}$, where any cycle repeats at most once (lines~\ref{algline:collectpath1}--\ref{algline:collectpath2}).
Specifically, the function \textsf{AcyclicPath} collects all the acyclic undirected relation paths from $R^{*}$ to $R$ in the schema graph $G_{\Gamma}$,
while the function \textsf{augmentPathWithCycle} augments each acyclic path by appending each cycle at most one time.
For each undirected relation path $p$, we compute $\mathcal{I}(\mathbf{t}_p, p)$ and $\mathcal{I}(\mathbf{t}_n, p)$ according to Definition~\ref{def:activated_relation} 
for each $\mathbf{t}_p \in R_p^{*}$ and $\mathbf{t}_n \in R_n^{*}$, respectively.
Therefore, we can identify $R$ as a non-dummy relation if both the conditions in Definition~\ref{def:dummy_relation} are violated (lines~\ref{algline:excsecond1}--\ref{algline:excfirst2}).
Finally, we obtain the reduced relational representation $\mathcal{R}'$ that excludes all the dummy relations (line~\ref{algline:returnr}).

\begin{newexample}
	Consider the undirected relation paths from \textsf{Method} to \textsf{Modifier} in Figure~\ref{fig:schema_graph}(a).
	Algorithm~\ref{alg:dummy} collects the acyclic path 
	$p_3: \textsf{Method} \hookrightarrow_{(\textsf{mdf\_id}, +1)} \textsf{Modifier}$
	and augments it to form the path $p_4: \textsf{Method} \hookrightarrow_{(\textsf{mdf\_id}, +1)} \textsf{Modifier} \hookrightarrow_{(\textsf{mdf\_id}, -1)} \textsf{Method} \hookrightarrow_{(\textsf{mdf\_id}, +1)} \textsf{Modifier}$.
Based on the activated relations $\mathcal{I}(\mathbf{t}, p_3)$ and $\mathcal{I}(\mathbf{t}, p_4)$ for each tuple $\mathbf{t}$ in \textsf{Method},
	we can find that $\mathcal{I}(\mathbf{t}_p, p) = \mathcal{I}(\mathbf{t}_n , p)$ for every $\mathbf{t}_p \in R_p^{*}$ and  $\mathbf{t}_n \in R_n^{*}$.
	Thus, \textsf{Modifier} is a dummy relation.
\end{newexample}

Essentially, our representation reduction analyzes the example program upon its relational representation.
The activated relations provide sufficient clues to identifying unnecessary relations, i.e., the dummy ones.
As the first step of the synthesis, the representation reduction
narrows down the relations used in the conjunctive query.
Furthermore, in the enumeration of the edges of a query graph, i.e., the sets $E_Q$ and $\Phi_Q$,
we only need to focus on the attributes in the non-dummy relations in the reduced relational representation,
which prunes the search space significantly.
Lastly, we formulate the soundness of the representation reduction as follows, which can further ensure the completeness of our synthesis algorithm in \Cref{subsec:algsummary}.

\begin{ntheorem}(Soundness of Representation Reduction)
	\label{thm:soundnessrr}
	If an instance of the MMCQS problem has a solution,
	there must be a conjunctive query $R_Q$, of which the Cartesian product only consists of non-dummy relations, 
	such that $R_Q$ is also a solution.
\end{ntheorem}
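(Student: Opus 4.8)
The plan is to start from an arbitrary solution $R_Q$ of the MMCQS instance and transform it into a solution whose Cartesian product mentions only non-dummy relations. Write $R_Q = \Pi_{(A_1.*)}(\sigma_\Theta(\rho_{A_1}(R_1) \times \cdots \times \rho_{A_m}(R_m)))$ with $R_1 = R^{*}$ (the relation being partitioned). Via the bijection $\kappa$, view $R_Q$ as its query graph $G_Q = (N_Q, E_Q, \Phi_Q)$; since $R_Q$ contains every positive tuple, the query graph must be connected through the node $(R^{*}, A_1)$, so for every relation node $(R_i, A_i)$ there is a path in $G_Q$ from $(R^{*}, A_1)$ to $(R_i, A_i)$. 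Such a path in the query graph projects, via the schema-graph containment noted in \cref{subsec:gs}, onto an undirected relation path $p_i$ from $R^{*}$ to $R_i$ in $G_\Gamma$; and by \cref{property} we may assume each such $p_i$ traverses any cycle at most once, so it is one of the finitely many paths that Algorithm~\ref{alg:dummy} actually inspects.

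The key observation is a monotonicity/coverage claim connecting the activated relation to membership in the query output. For a positive tuple $\mathbf{t}_p \in R_p^{*}$, the fact that $\mathbf{t}_p$ survives into $R_Q$ forces, for each $i$, a witness tuple in $R_i$ compatible with the equality constraints along $p_i$; hence $\mathcal{I}(\mathbf{t}_p, p_i) \neq \emptyset$ for every relation $R_i$ in the product. Conversely, for a negative tuple $\mathbf{t}_n \in R_n^{*}$, the fact that $\mathbf{t}_n$ is excluded by $R_Q$ means some atomic conjunct of $\Theta$ fails on every completion of $\mathbf{t}_n$; I would argue that if this failing conjunct lives "inside" the subgraph reachable only through a dummy relation $R_i$ — i.e. if $\mathcal{I}(\mathbf{t}_p, p_i) = \mathcal{I}(\mathbf{t}_n, p_i)$ for all positive/negative pairs, as dummyness guarantees — then the same conjunct cannot have been what excluded $\mathbf{t}_n$, because positive and negative tuples are indistinguishable along $p_i$. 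Formalizing this requires a careful case split: either the blame for excluding $\mathbf{t}_n$ can always be reassigned to an equality/string conjunct supported on a non-dummy part of the graph, or else we derive a contradiction with $\mathbf{t}_n$ actually being excluded. This reassignment-of-blame argument is the main obstacle, since it must handle string constraints and chains of equality constraints uniformly, and must account for the possibility that a single relation symbol appears multiple times in the product under distinct aliases.

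Granting the observation, I finish by induction on the number of dummy relation nodes in $G_Q$. If $G_Q$ has a dummy node $(R_i, A_i)$, pick one that is a "leaf-most" such node — minimal in the sense that the subtree of $G_Q$ hanging below it (away from $(R^{*},A_1)$) contains no non-dummy relation, which is possible since dummyness of the connecting path is inherited along sub-paths by \cref{property} together with the definition. Delete that entire subtree: all relation aliases in it, all equality edges incident to it, and all string-constraint edges from $\Phi_Q$ attached to those aliases. By the observation, every positive tuple is still in the output (removing conjuncts and factors only enlarges the projection), and every negative tuple is still excluded (the conjunct responsible for excluding it was, by the reassignment argument, not in the deleted subtree). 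The result is a valid query graph $G_Q'$ — it is still connected and still respects the schema graph — with strictly fewer dummy nodes, so $\kappa^{-1}(G_Q')$ is again a solution. Iterating yields a solution with no dummy relations, which is exactly the claim of \Cref{thm:soundnessrr}; its contrapositive use in \Cref{subsec:algsummary} is then that searching only over non-dummy relations loses no solutions, i.e.\ completeness is preserved.
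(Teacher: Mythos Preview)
Your approach matches the paper's: both argue by iterated elimination, removing a dummy relation (together with all atomic conjuncts mentioning it) from a given solution and showing the result is still a solution, repeating until no dummy relations remain. The paper deletes all aliases of one dummy relation symbol at once and wraps the iteration in a proof by contradiction; you delete a leaf-most dummy subtree and induct directly on the number of dummy nodes, but the core step is the same. The ``reassignment-of-blame'' argument you flag as the main obstacle is exactly the case split the paper performs on whether the removed equality conjuncts induce an undirected relation path $p$ from $R^{*}$ to the dummy relation, then invoking the two clauses of Definition~\ref{def:dummy_relation} for that $p$ to conclude that those conjuncts could not have been what excluded any negative tuple.

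One small correction: your claim that ``since $R_Q$ contains every positive tuple, the query graph must be connected through $(R^{*},A_1)$'' is false as stated --- take $\Pi_{(A_1.*)}\sigma_{\mathbf{true}}(\rho_{A_1}(R^{*})\times\rho_{A_2}(R'))$ with any non-empty $R'$, which has a disconnected query graph yet equals $R^{*}$. What you actually need is the easy observation that any component of $G_Q$ disconnected from $(R^{*},A_1)$ involves only non-empty relations (otherwise $R_Q=\emptyset$, contradicting $R_p^{*}\subseteq R_Q$), and therefore deleting such a component leaves the projection onto $A_1.*$ unchanged. After this preprocessing step your path-based argument goes through; the paper sidesteps the issue by not appealing to connectivity at all and instead separately handling the case where the removed conjuncts induce no path from $R^{*}$ to the dummy relation.
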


	\begin{proof}
	Assume that there does not exist a solution, only manipulating non-dummy relations.
	The assumption implies that there is a conjunctive query $R_Q'$ satisfying the following conditions:
	\begin{itemize}[leftmargin=*]
		\item $R_Q'$ is the solution of the MMCQS problem instance;
		\item A dummy relation $R$ appears in the Cartesian product of $R_Q'$.
	\end{itemize}
	Denote $R_Q'=\Pi_{(A_1.*)} (\sigma_{\Theta'} (\rho_{A_1}(R_1) \times \ldots \times \rho_{A_{m'}}(R_{m'})))$,
	where $\Theta'=\phi_e^{(1)} \land \cdots \land \phi_e^{(n_e')} \land \phi_s^{(1)} \land \cdots \land \phi_s^{(n_s')}$.
	Without the loss of generality, we assume that the dummy relation $R$ appears in the last few renaming expressions in the selection condition and equality/string constraints, i.e., $R_{m + 1} = R_{m + 2} = \cdots = R_{m'} = R$, and $\phi_e^{(i)}$ and  $\phi_s^{(j)}$ constrain the attributes of $R$, where $n_e < i \leq n_e'$ and $n_s < j \leq n_s'$.
	Then we can construct the query candidate $R_Q$ by removing all the occurrences of the relation $R$ from the Cartesian product and the selection condition. That is, $R_Q = \Pi_{(A_1.*)} (\sigma_{\Theta} (\rho_{A_1}(R_1) \times \ldots \times \rho_{A_{m}}(R_{m})))$,
	where $\Theta=\phi_e^{(1)} \land \cdots \land \phi_e^{(n_e)} \land \phi_s^{(1)} \land \cdots \land \phi_s^{(n_s)}$.
	As long as we prove that $R_Q$ is also a solution, 
	we can repeat the process of eliminating dummy relations iteratively until the query does not contain any dummy relations,
	which conflicts with our assumption at the beginning.
	Hence, we only need to prove that the above conjunctive query $R_Q$ is a solution, which finally proves the theorem.
	\begin{itemize}[leftmargin=*]
		\item First, the selection condition $\Theta$ in the conjunctive query $R_Q$ is weaker than the selection condition $\Theta'$,
		as $\Theta$ excludes several atomic formulas from $\Theta'$.
		\item Second, $R_1$ is renamed to $A_1$ and cannot be a dummy relation,
		which means that $R_1$ still appears in the Cartesian product of $R_Q$.
		\item Third, if the eliminated equality constraints $\phi_e^{(i)} (n_e < i \leq n_e')$ do not induce any undirected relation path from $R_1$ to $R$, the selection conditions $\Theta$ and $\Theta'$ pose the same restrictions on the tuples in $R_1$, indicating that $R_Q = R_Q'$.
		\item Fourth, if the eliminated equality constraints $\phi_e^{(i)} (n_e < i \leq n_e')$ induce a undirected relation path $p$ from $R_1$ to $R$, we need to discuss two cases as follows:
		\begin{itemize}[leftmargin=*]
			\item Case 1: $\mathcal{I}(\mathbf{t}_p, p) = \emptyset$ for a specific tuple $\mathbf{t}_p \in R_p^{*} \subset R_1 = R^{*}$. 
			According to Definition 3.4, we can obtain that $\mathbf{t}_p \notin \Pi_{(A_1.*)}\sigma_{\Theta'} (\rho_{A_1}(R_1) \times \ldots \times \rho_{A_{m'}}(R_{m'}))$
			Thus, $R_Q'$ cannot contain $\textbf{t}_p$, which conflicts with the fact that $R_Q'$ is a solution of the problem instance.
			\item Case 2: $\mathcal{I}(\mathbf{t}_p, p) = \mathcal{I}(\mathbf{t}_n, p)$ for each $\mathbf{t}_p \in R_p^{*} \subset R_1= R^{*}$ and $\mathbf{t}_n \in R_n^{*} \subset R_1= R^{*}$.
			In this case, no matter how $\phi_e^{(i)} (n_e < i \leq n_e')$ constrain the tuples in $R^{*}$, i.e., $A_1$ or $R_1$, 
			all the positive and negative tuples in $R^{*}$ belong to $\Pi_{(A_1.*)}\sigma_{\Theta'} (\rho_{A_1}(R_1) \times \ldots \times \rho_{A_{m'}}(R_{m'}))$ simultaneously.
			This also conflicts with the fact that $R_Q'$ is a solution to the problem instance.
		\end{itemize}
	\end{itemize}
	Therefore, we can always construct a conjunctive query with non-dummy relations if the problem instance has a solution.
\end{proof}


\subsection{Bounded Refinement}
\label{sec:refine}

Based on the reduced relational representation $\mathcal{R}'$,
we can enumerate query candidates by selecting proper nodes corresponding to non-dummy relations in $\mathcal{R}'$ and edges connecting such nodes.
However, the search space is potentially unbounded.
The relations can occur in a query multiple times,
i.e., a node in the schema graph can be selected more than one time.
Meanwhile, the literal in a string constraint can be instantiated flexibly,
which increases the difficulty of enumerating a query graph with proper instantiation of $\Phi_Q$.
To achieve high efficiency,
we propose the bounded refinement to expand query graphs on demand
and strengthen the query with the strongest string constraints.
Specifically, we first introduce the notions of the bounded query and the refinable query (\cref{subsec:refinable}),
and then present the details of enumerating the query graphs (\cref{subsec:enumquery}).

\subsubsection{Bounded Query and Refinable Query}
\label{subsec:refinable}

As shown in Example~\ref{ex:cq}, a relation can appear multiple times in the Cartesian product of a conjunctive query, inducing an unbounded search space in the synthesis.
The unboundedness of the search space poses the great challenge of enumerating the query candidates efficiently.
However, we realize that the conjunctive query for a code search task often involves only a few relations, each of which appears quite a few times.
Thus, it is feasible to bound the maximal multiplicity of the relation in the query and conduct the bounded enumeration.
Formally, we introduce the notion of the \emph{(m,k)-bounded query} as follows.

\begin{newdefinition}((m, k)-Bounded Query)
An $(m, k)$-bounded query is a conjunctive query with $m$ relations such that (1) each relation appears at most $k$ times; (2) there is a relation appearing exactly $k$ times.
\end{newdefinition}

\begin{newexample}
The conjunctive query $\Pi_{(A_1.*)}(\sigma_{\mathbf{true}}(\rho_{A_1}(\textsf{Method})))$ is a $(1, 1)$-bounded query.
Similarly, the conjunctive query in Example~\ref{ex:cq} is a $(4, 2)$-bounded query.
\end{newexample}

Intuitively,
we can enumerate the $(m, k)$-bounded queries by selecting non-dummy relations at most $k$ times,
forming a query graph with $m$ nodes.
When constructing the sets $E_Q$ and $\Phi_Q$ for the query graph enumeration,
we only need to concentration on the attributes of the selected relations.
However, not all the query graphs are worth enumerating.
If $R_Q$ excludes a positive tuple,
there is no need to add more nodes and edges to its query graph,
as it would induce a stronger selection condition,
making the new query still exclude the positive tuple.
Formally, we formulate the notion of the \emph{refinable query} as follows. 

\begin{newdefinition}(Refinable Query)
	A conjunctive query $R_Q$ is a refinable query if and only if for any $\mathbf{t}_p \in R_{p}^{*}$ we have $\mathbf{t}_p \in R_Q$, i.e., $R_p^{*} \subseteq R_Q$.
\end{newdefinition}

\begin{newexample}
	\label{ex:refinableQ}
	Consider $R_Q:=\Pi_{(A_1.*)} (\sigma_{\Theta} (\rho_{A_1}(\textsf{Method}) \times \rho_{A_2}(\textsf{Type}) \times \rho_{A_3}(\textsf{Parameter})))$,
	where the selection condition $\Theta$ in the selection operation is as follows:
	$$\Theta := (A_1.\textsf{id} = A_3.\textsf{method\_id}) \land (A_1.\textsf{ret\_type\_id} = A_2.\textsf{id}) \land \textsf{equal}(A_2.\textsf{name}, ``\textsf{CacheConfig}\text{''})$$	
	It is a refinable query as $R_p^{*} \subseteq R_Q = \{ (\textsf{M1}, \ \textsf{I1}, \ \textsf{T3}, \ \textsf{MDF1}),
	(\textsf{M2}, \ \textsf{I2}, \ \textsf{T3}, \ \textsf{MDF1})\}$.
\end{newexample}

Essentially, a refinable query is the over-approximation of the positive tuples.
When it excludes all the negative tuples, the query is exactly a query candidate.
Thus, we can collect the query candidates by refining the refinable queries in the bounded enumeration.

\subsubsection{Enumerating Query Candidates via Refinement}
\label{subsec:enumquery}

We denote the sets of $(m, k)$-bounded refinable queries and query candidates by $\mathcal{S}_R(m, k)$ and $\mathcal{S}_C(m, k)$, respectively,
and set a multiplicity bound $K$ to bound the multiplicity of a relation.
To conduct a bounded enumeration,
we have to compute the set $\mathcal{S}_C(m, k)$ for $k \leq K$,
which can be achieved by examining whether the queries in $\mathcal{S}_R(m, k)$ are query candidates.
Obviously, exhaustive enumeration is impossible as the search space of $(m, k)$-bounded queries is exponential to $m$ and $k$.
To avoid unnecessary enumeration,
we leverage the structure of $\mathcal{S}_C(m, k)$, which is formulated in the following property.

\begin{nproperty}
	\label{property2}
	For every refinable query $R_Q \in \mathcal{S}_C(m, k)$ and $\kappa(R_Q) := (N_Q, E_Q, \Phi_Q)$,
	there exists a query graph $G_Q^{1}$ or $G_Q^{2}$ such that
	\begin{itemize}[leftmargin=*]
		\item $N_Q = N_Q^1 \cup \{ (R, A_i) \}$, $E_Q^1 \subseteq E_Q$, 
		and $\Phi_Q^1 \subseteq \Phi_Q$,
		where $R$ appears in $G_Q^{1}$ exactly $(k-1)$ times.
		Here, $G_Q^{1} = ( N_Q^1, E_Q^1, \Phi_Q^1)$ and $\kappa^{-1}(G_Q^{1}) \in \mathcal{S}_R(m-1, k-1)$.
		\item $N_Q = N_Q^2 \cup \{ (R, A_i) \}$, $E_Q^2 \subseteq E_Q$, and $\Phi_Q^2 \subseteq \Phi_Q$,
		where $R$ appears in $G_Q^{2}$ fewer than $k$ times.
		Here, $G_Q^{2} = ( N_Q^2, E_Q^2, \Phi_Q^2)$ and $\kappa^{-1}(G_Q^{2}) \in \mathcal{S}_R(m-1, k)$.
	\end{itemize}
\end{nproperty}

Property~\ref{property2} shows that the sets of the nodes and edges in a query graph of a refinable query 
are subsumed by the ones of a refinable query with fewer relations,
which permit us to
enumerate the query candidates by computing $\mathcal{S}_R(m, k)$  and $\mathcal{S}_C(m, k)$ inductively.
Technically, we achieve the enumerative search via the bounded refinement.
Assuming that we have $\mathcal{S}_R(m', k')$ and $\mathcal{S}_C(m', k')$ for all $m' < m$ and $k' \leq k$.
Algorithm~\ref{alg:refine} computes the sets $\mathcal{S}_R(m, k)$ and $\mathcal{S}_C(m, k)$.
The technical details of the refinement are as follows:

\begin{center}
	\SetAlFnt{\small}
	\begin{algorithm}[t]
		\SetAlgoLined
		\SetAlgoVlined
		\SetVlineSkip{0pt}
		\SetKwIF{If}{ElseIf}{Else}{if}{:}{else if}{else}{end if}%
		\SetKwFor{While}{while}{do}{end while}%
		\SetKwFor{ForEach}{foreach}{:}{endfch}
		\SetKwFor{ForEachParallel}{foreach}{do in parallel}{endfch}
		\SetKwRepeat{Do}{do}{while}
		\SetCommentSty{redcommfont}
		\SetKwFunction{refine}{refine}
		\SetKwFunction{multiplicity}{multiplicity}
		\SetKwFunction{pop}{pop}
		\SetKwFunction{trans}{trans}
		\SetKwFunction{poll}{poll}
		\SetKwFunction{expand}{expand}
		\SetKwFunction{synLCS}{synLCS}
		
		\SetKw{Continue}{continue}
		\SetKw{Return}{return}
		\SetKw{Break}{break}
		\SetKw{Or}{or}
		\SetKw{And}{and}
		\SetKwFunction{schemaGraph}{getSchemaGraph}
		\SetKwFunction{noCyclePath}{AcycPath}
		\SetKwFunction{cyclePath}{augCyc}
		\SetKwComment{Comment}{$\triangleright$\ }{}
		\newcommand\mycommfont[1]{\textcolor{violet}{#1}}
		\SetCommentSty{mycommfont}
		\SetKwProg{myproc}{Procedure}{:}{}
		\myproc{\refine{$\mathcal{S}_R$, $\mathcal{S}_C$, $R_p^{*}$, $R_n^{*}$, $m$, $k$, $\mathcal{R}'$}}{
			\If{$m = 1$ \And $k = 1$}{
				$W \leftarrow \{ (\emptyset, \emptyset, \bot) \}$ \label{alg2line:graph}\;
			}
			\If(\label{alg2line1}){$m > 1$}{
				$W \leftarrow \{  \kappa(R_Q) \ | \ R_Q \in \mathcal{S}_R(m - 1, k)  \} $\;
					\If{$k > 1$}{
					$W \leftarrow W \cup \{  \kappa(R_Q) \ | \ R_Q \in \mathcal{S}_R(m - 1, k - 1)  \}$\label{alg2line2}\;
				}
			}			
			\texttt{\\}
			$\mathcal{S}_R(m, k) \leftarrow \emptyset$\; 
			\While(\label{alg2line3}){W is not empty}{
				$G_Q \leftarrow \pop(W)$; $V \leftarrow \emptyset$\;
				\ForEach(\label{alg2line4}){$R \in \mathcal{R}'$}{
					\If{$\multiplicity(G_Q, R) < k$ \And $\kappa^{-1}(G_Q) \in \mathcal{S}_R(m - 1, k)$}{
						$V \leftarrow V \cup \expand(G_Q, R)$\;
					}
					\If{$\multiplicity(G_Q, R) = (k-1)$ \And $\kappa^{-1}(G_Q) \in \mathcal{S}_R(m - 1, k-1)$}{
						$V \leftarrow V \cup \expand(G_Q, R)$\label{alg2line5} \;					
					}
				}
			    			\texttt{\\}
				\ForEach(\label{alg2line6}){$G_Q: (N_Q, E_Q, \Phi_Q) \in V$ \And $R^{*}_p \subseteq \kappa^{-1}(G_Q)$}{
					$\mathcal{S}_R(m, k) \leftarrow \mathcal{S}_R(m, k) \cup \kappa^{-1}(G_Q)$ \label{alg2line10} \;
					\ForEach(\label{alg2line7}){$(R_i, A_i) \in N_Q$ \And $(a, \textsf{STR}) \in \Gamma(R_i)$}{
						$(p, \ell) \leftarrow \synLCS(G_Q, R_i, A_i, a, R_p^{*})$\label{alg2line11}\;
						$G_Q' \leftarrow (N_Q, E_Q, \Phi_Q [(R_i, A_i, a) \mapsto (p, \ell)])$\label{alg2line12} \;
						$\mathcal{S}_R(m, k) \leftarrow \mathcal{S}_R(m, k) \cup \kappa^{-1}(G_Q')$\label{alg2line8}\;
					}
				}
			}
					\texttt{\\}
			$\mathcal{S}_C(m, k) \leftarrow \{ R_Q \ | \ R_Q \in \mathcal{S}_R(m, k), R_p^{*} = R_Q  \}$\label{alg2line9}\;
		}
		\caption{Enumerating query candidates via refinement}
		\label{alg:refine}
	\end{algorithm}
	\vspace{-8mm}
\end{center}

\begin{itemize}[leftmargin=*]
	\item For the base case, where $m=1$ and $k=1$, we construct an empty query graph (line \ref{alg2line:graph}).
	For a general case, we merge the sets of the query graphs induced by the refinable queries in $\mathcal{S}_R(m-1, k)$ and $\mathcal{S}_R(m-1, k-1)$ (lines \ref{alg2line1}--\ref{alg2line2}).
	Hence, we obtain a set of query graphs $W$ to maintain all the refinable queries with $m$ relations.
	\item For each query graph $G_Q \in W$, we leverage the function \textsf{expand} wraps a specific relation $R$ as a new node and add new edges,
	producing a set of query graphs containing the relation $R$ (lines \ref{alg2line4}--\ref{alg2line5}).
	Such query graphs are maintained in the set $V$.
	\item For each query graph $G_Q \in V$ that induces a refinable query,
	we add the induced refinable query to the set $\mathcal{S}_R(m, k)$ (line~\ref{alg2line10}) and attempt to synthesize a string constraint (lines~\ref{alg2line7}--\ref{alg2line8}).
	Specifically, the function \textsf{synLCS} examines the values of a string attribute $a$ in the positive tuples
	to compute the longest common substring $\ell$ and the strongest string predicate $p$ (line~\ref{alg2line11}),
	where all the values of $a$ in the positive tuples satisfy the string constraints induced by $p$ and $\ell$.
	By updating $\Phi_Q$, we add a new edge to $G_Q$ and produce a new query graph inducing a refinable query (lines~\ref{alg2line12}--\ref{alg2line8}).
	\item Finally, we check whether a $(m, k)$-bounded refinable query is a query candidate or not, which yields the set $\mathcal{S}_C(m, k)$ (line~\ref{alg2line9}).
	The sets $\mathcal{S}_R(m, k)$ and $\mathcal{S}_C(m, k)$ are exactly the sets of  $(m, k)$-bounded refinable queries and query candidates.
\end{itemize}

Notably, we construct the string constraints on demands to strengthen the selection condition of 
refinable queries, which is achieved by the function \textsf{synLCS} at line~\ref{alg2line11}.
The reduction from string constraint synthesis to the LCS computation enable us 
to leverage existing algorithms, such as general suffix automaton~\cite{MohriMW09},
to compute the string literal $\ell$ and select a string predicate $p$ efficiently,
which promote the efficiency of query refinement.

\begin{newexample}
	Figure~\ref{fig:refinement_example} shows the part of the refinement 
	for the instance in Example~\ref{ex:problem}.
	$G_Q^{1}$ only contains the relation \textsf{Method}.
	After adding the nodes and edges,
	we construct the query graphs of refinable queries with more relations and atomic constraints,
	such as $G_Q^{2}$, $G_Q^{3}$, and $G_Q^{4}$.
	Particularly, we identify the query candidate $\kappa^{-1}(G_Q^{4})$, i.e., $R_Q$ in Example~\ref{ex:cq}.
\end{newexample}

\begin{figure}[t]
	\centering
	\includegraphics[width=\linewidth]{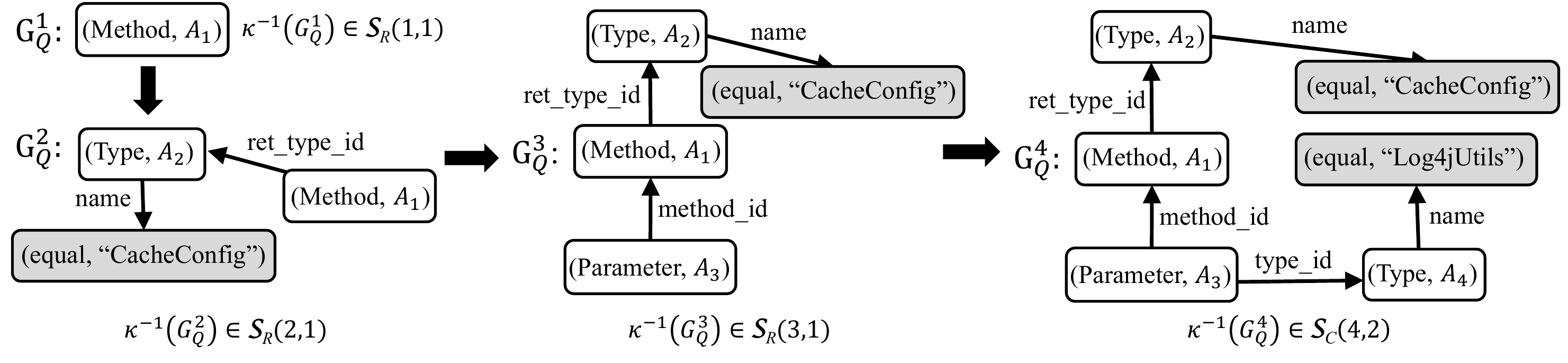}
	\vspace{-6mm}
	\caption{The example of the bounded refinement}
	\vspace{-4mm}
	\label{fig:refinement_example}
\end{figure}

\subsection{Candidate Selection}
\label{sec:prior}

Based on the bounded refinement,
we collect the query candidates in which each relation appears no more than $K$ times, 
where $K$ is the multiplicity bound.
However, not all the query candidates are the desired ones.
In this section,
we introduce dual quantitative metrics to prioritize queries (\cref{subsec:dualMetric})
and select query candidates during the refinement (\cref{subsec:blending}).

\subsubsection{Dual Quantitative Metrics}
\label{subsec:dualMetric}
Desired queries are expected to express the search intent correctly,
covering as many grammatical concepts in the natural language as possible.
Also, they should be as simple as possible according to Occam's razor.
Based on the intuitions, we formalize the following two metrics
and then define the total order to prioritize the query candidates.

\begin{newdefinition}(Named-Entity Coverage)
\label{def:nec}
Given a function $h$ mapping an attribute of a relation to a set of words in natural language,
the named entity coverage of a conjunctive query $R_Q$ with respect to a natural language description $s$ is
$$\alpha_h(R_Q, s) = \frac{1}{|N(s)|} \cdot \big| \bigcup_{(R_i, a_j) \in \mathcal{A}(\Theta)} h(R_i, a_j) \cap N(s) \big|$$
Here, $\mathcal{A}(\Theta)$ contains the relations and their attributes appearing in the selection condition $\Theta$
while $N(s)$ is the set of the named entities in the description $s$.
\end{newdefinition}

\begin{newdefinition}(Structural Complexity)
Let $\delta(\Theta)$ be the number of the atomic formulas in the selection condition $\Theta$.
The structural complexity of a conjunctive query
$R_Q$ is
$$\beta(Q) = m + \delta(\Theta)$$
\end{newdefinition}

To compute the named entity coverage,
we instantiate the function $h$ manually 
and obtain the set $N(s)$ from the natural language description $s$ based on the named entity recognition techniques~\cite{ManningSBFBM14}.
The computation does not introduce much overhead,
as the natural language description exhibits a fairly small length in our scenarios. 
Meanwhile, measuring the structural complexity is quite straightforward.
The quantities $m$ and $\delta(\Theta)$ can be totally determined by the sizes of the sets $N_Q$, $E_Q$ and $\Phi_Q$ in a query graph.
Thus, computing the two quantities does not introduce much overhead during the enumeration.

\begin{newexample}
	\label{ex:metrics}
	Assume that we instantiate the function $h$ as follows:
	$$h(\textsf{Parameter}, \textsf{id}) = \{\textsf{parameter}\}, \ h(\textsf{Method}, \textsf{id}) = \{\textsf{method}\}, \ h(\textsf{Method}, \  \textsf{idf\_id}) = \{\textsf{identifier}\}$$
	$$h(\textsf{Method}, \textsf{ret\_type\_id}) = \{\textsf{return}, \textsf{type}\}, \ h(\textsf{Method}, \textsf{mdf\_id}) = \{\textsf{modifier}\}$$
	Consider the natural language description $s$ in Example~\ref{ex:problem} and the conjunctive query $R_Q$ in Example~\ref{ex:cq},
	of which the query graph is shown in Figure~\ref{fig:schema_graph}(b).
	We can obtain a set of named entities $N(s) = \{ \textsf{method}, \textsf{type}, \textsf{parameter}, \textsf{return}\}$.
	Therefore, we  have $\alpha_h(R_Q, s) = 1$.
	According to $m = 4$ and $\delta(\Theta) = 5$, 
	its structural complexity is $\beta(R_Q) = 4 + \delta(\Theta) = 9$.
\end{newexample}

Intuitively, the selection condition of a query is more likely to conform to the user intent if the query has a higher named entity coverage.
Besides, the simpler form query can have better generalization power among the query candidates covering the same number of named entities.
Based on Occam's razor, we should choose the simplest query from the candidates that maximizes the named entity coverage.
Thus, we propose the \emph{total order} of conjunctive queries as follows.


\begin{newdefinition}(Total Order)
	\label{def:selection_criteria}
	Given the function $h$ in Definition~\ref{def:nec},
	we have $R_Q^{2} \preceq_s R_Q^{1}$
	if and only if they satisfy one of the following conditions:
	$$\alpha_h(R_Q^{1}, s) \geq \alpha_h(R_Q^{2}, s) \ \   \text{or}$$
	$$\alpha_h(R_Q^{1}, s) = \alpha_h(R_Q^{2}, s), \ \beta(R_Q^{1}) \leq \beta(R_Q^{2})$$
\end{newdefinition}

\begin{newexample}
	\label{ex:query_priority}
	Consider the following query candidates for the instance in Example~\ref{ex:problem}.
	$$R_Q^{c1} := \Pi_{(A_1.*)} (\sigma_{A_1.\textsf{name}} = ``\textsf{foo}\text{''} (\rho_{A_1}(\textsf{Method})))$$
	$$R_Q^{c2} := \Pi_{(A_1.*)} (\sigma_{\Theta_2} (\rho_{A_1}(\textsf{Method}) \times \rho_{A_2}(\textsf{Type}) \times \rho_{A_3}(\textsf{Parameter}) \times \rho_{A_4}(\textsf{Type})))$$
	Here, $\Theta_2 := \Theta \land (A_1.\textsf{name} = ``\textsf{foo}\text{''})$ and $\Theta$ is shown in Example~\ref{ex:cq}.
	Given the function $h$ shown in Example~\ref{ex:metrics},
	we can obtain that $\alpha_h(R_Q^{c1}, s) = \frac{1}{4}$, $\alpha_h(R_Q^{c2}, s) = 1$, $\beta(R_Q^{c1}) = 2$,
	and $\beta(R_Q^{c2}) = 10$.
	According to Example~\ref{ex:metrics}, 
	we have $R_Q^{c1} \preceq_s R_Q^{c2} \preceq_s R_Q$.
\end{newexample}

The total order is an adaption of Occam's razor for our synthesis problem.
Without the named entity coverage,
we would select the query candidates with the lowest structural complexity, such as $R_Q^{c1}$ in Example~\ref{ex:query_priority}, even if they do not constrain the relationship of several grammatical constructs as expected.
Based on the total order, we can select the query candidates by solving the dual-objective optimization problem, which finally yields the desired queries for code search.

\subsubsection{Blending Selection with Refinement}
\label{subsec:blending}

Based on Definition~\ref{def:selection_criteria},
we propose Algorithm~\ref{alg:select} that blends the candidate selection with the bounded refinement,
which is more likely to obtain the desired queries for a code search task.
First, we obtain the schema graph and remove the dummy relations via the representation reduction (line~\ref{algline:reduction2}).
We then compute the upper bound of the named entity coverage, which is denoted by $\widetilde{\alpha}$ (line~\ref{algline:upperbound}).
After the initialization of $\alpha_{max}$, $\beta_{min}$, and $\mathcal{S}_Q$ (line~\ref{algline:init_start}),
we conduct the bounded refinement and select the query candidates in each round (lines~\ref{algline:bound1}--\ref{algline:terminate2}).
Obviously, there are at most $K \cdot |\mathcal{R}'|$ relations in a query for a given multiplicity bound $K$ (line~\ref{algline:bound1}),
and a relation can only appear at most $min(K, m)$ times in a query with $m$ relations (line~\ref{algline:queryselection8}).
In each round, we fuse the refinement and selection as follows:

\begin{center}
	\SetAlFnt{\small}
	\begin{algorithm}[t]
		\SetAlgoLined
		\SetAlgoVlined
		\SetVlineSkip{0pt}
		\SetKwIF{If}{ElseIf}{Else}{if}{:}{else if}{else}{end if}%
		\SetKwFor{While}{while}{do}{end while}%
		\SetKwFor{ForEach}{foreach}{:}{endfch}
		\SetKwFor{ForEachParallel}{foreach}{do in parallel}{endfch}
		\SetKwRepeat{Do}{do}{while}
		\SetCommentSty{redcommfont}
		\SetKwFunction{select}{synthesize}
		\SetKwFunction{trans}{trans}
		\SetKwFunction{poll}{poll}
		
		\SetKw{Continue}{continue}
		\SetKw{Return}{return}
		\SetKw{Break}{break}
		\SetKw{Or}{or}
		\SetKw{And}{and}
		\SetKwFunction{schemaGraph}{getSchemaGraph}
		\SetKwFunction{noCyclePath}{AcycPath}
		\SetKwFunction{cyclePath}{augCycle}
		\SetKwFunction{dummy}{reduce}
		\SetKwFunction{update}{update}
		\SetKwFunction{refine}{refine}
		\SetKwComment{Comment}{$\triangleright$\ }{}
		\newcommand\mycommfont[1]{\small\textcolor{violet}{#1}}
		\SetCommentSty{mycommfont}
		\SetKwProg{myproc}{Procedure}{:}{}
		\myproc{\select{$\Gamma$, $\mathcal{R}$, $R_p^{*}$, $R_n^{*}, s, K$}}{
			$\mathcal{R}' \leftarrow \dummy(\Gamma, \mathcal{R}, R_p^{*}, R_n^{*})$ \label{algline:reduction2}\;
			$\widetilde{\alpha} \leftarrow \frac{1}{|N(s)|}|  \{w \in h(R, a) \cap N(s) \ | \ \exists R \in \mathcal{R}', T \in \mathcal{R}' \cup \{\textsf{STR}\}: (a, T) \in \Gamma(R) \}|$\label{algline:upperbound}\;
			$\alpha_{max} \leftarrow \textsf{MIN\_INT}$\label{algline:init_start}; 
			$\beta_{min} \leftarrow \textsf{MAX\_INT}$; 
			$\mathcal{S}_{Q} \leftarrow \emptyset$\label{algline:init_end}\;
			\ForEach(\label{algline:bound1}){$1 \leq m \leq K \cdot |\mathcal{R}'|$}{
				\If(\label{algline:terminate5}){$\mathcal{S}_R(m-1, k-1) = \emptyset$ \And $ \mathcal{S}_R(m-1, k) = \emptyset$}{
					\Continue\label{algline:terminat6}\;
				}
				\ForEach(\label{algline:queryselection8}){$1 \leq k \leq min(K, m)$}{
					$\refine(\mathcal{S}_R, \mathcal{S}_C, R_p^{*}, R_n^{*}, m, k, \mathcal{R}')$\label{algline:refine}\;
					\ForEach(\label{algline:queryselection1}){$R_Q \in \mathcal{S}_{C}(m, k)$}{
						($\alpha_{max}$, $\beta_{min}$, $\mathcal{S}_Q$) $\leftarrow$ \update($R_Q$, $\alpha_{max}$, $\beta_{min}$, $\mathcal{S}_Q$)\label{algline:queryselection2}\;
						}
					$\widetilde{\beta} =  \min (\{ \beta(R_Q) \ | \ R_Q \in \mathcal{S}_R(m, k) \cup \mathcal{S}_R(m, k-1)   \})$\;
					\If(\label{algline:terminate1}){$\alpha_{max} = \widetilde{\alpha}$ \And $\beta_{min} \leq \widetilde{\beta}$}{
						\Return{$\mathcal{S}_Q$}\label{algline:terminate2}\;
					}
				}
				}
				\Return{$\mathcal{S}_Q$}\;
		}
		\caption{Blending selection with refinement}
		\label{alg:select}
	\end{algorithm}
\vspace{-9mm}
\end{center}


\begin{itemize}[leftmargin=*]	
	\item Enumerate $(m, k)$-bounded refinable queries and query candidates with Algorithm~\ref{alg:refine}, strengthening the selection conditions of the refinable queries in previous rounds (line~\ref{algline:refine}).
	\item Compute $\alpha_h(R_Q, s)$ and $\beta(R_Q)$ for each $(m, k)$-bounded query candidate $R_Q$
	and update the selected candidate set $\mathcal{S}_Q$, $\alpha_{max}$, and $\beta_{min}$ (lines~\ref{algline:queryselection1}--\ref{algline:queryselection2}).
	Particularly, $\alpha_{max}$ and $\beta_{min}$ are updated to identify the largest candidates with respect to the total order.
	\item Terminate the iteration in advance and return the set $\mathcal{S}_Q$ if $\alpha_{max}$ reaches the upper bound of the named entity coverage, i.e., $\widetilde{\alpha}$,
	and the queries to be refined in the next round do not have lower structural complexities than $\beta_{min}$
	(lines~\ref{algline:terminate1}--\ref{algline:terminate2}).
\end{itemize}

\vspace{-3mm}

The refinement strengthens the selection conditions
to exclude all the negative tuples.
Specifically,
we explore the bounded search space containing the query graphs of refinable queries,
avoiding the unnecessary enumerative search effectively.
In real-world code search tasks, the selection condition is often involved different kinds of grammatical constructs,
making each relation appear often appear in the conjunction query one or two times.
Therefore, we set the multiplicity bound $K$ to 2 for real-world code search tasks in practice,
of which the effectiveness will be evidenced by our evaluation.

The natural language description benefits our synthesis process from two aspects.
First, the selected queries in $\mathcal{S}_Q$ are the largest queries under the total order,
and thus they are more likely to conform to the user's search intent than other query candidates.
Second, we terminate the enumerative search if the named entity coverage cannot increase with a smaller structural complexity,
avoiding unnecessary enumerative search of bounded query candidates for the efficiency improvement.

\begin{newexample}
Consider $R_Q^{c1}$, $R_Q^{c2}$ and $R_Q$ in Example~\ref{ex:query_priority}.
We obtain the query candidate $R_Q^{c1}$ when $(m, k) = (1, 1)$, and discover the candidates $R_{Q}$ and $R_Q^{c2}$ when $(m, k) = (4, 2)$.
Based on Definition~\ref{def:selection_criteria}, we select and maintain the query candidate $R_Q$ in $\mathcal{S}_Q$.
Also, we find $\alpha_{max} = \alpha_h(R_Q, s)=1$ reaches $\widetilde{\alpha}$,
indicating that the candidates in the subsequent rounds cannot yield a larger named entity coverage
with lower structural complexity.
Algorithm~\ref{alg:select} terminates and returns the set $\mathcal{S}_Q = \{ R_Q \}$.
\end{newexample}

\subsection{Summary}
\label{subsec:algsummary}

Our synthesis algorithm \ToolName\ is an instantiation of a new synthesis paradigm of the multi-modal synthesis, 
which reduces the synthesis problem to a multi-target optimization problem.
We now formulate and prove the soundness, completeness, and optimality of \ToolName\ with 
three theorems as follows.

\begin{ntheorem}(Soundness)
	\label{thm:soundness}
	For any $R_Q \in \mathcal{S}_Q$, where $\mathcal{S}_Q$ is returned by Algorithm~\ref{alg:select},
	$R_Q$ must contain all the positive tuples in $R_p^{*}$
	and exclude the negative tuples in $R_n^{*}$.
\end{ntheorem}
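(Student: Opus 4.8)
The plan is to follow the membership of $\mathcal{S}_Q$ back to the candidate sets produced during refinement and then read off the two required properties directly from the definition of a query candidate. First I would trace Algorithm~\ref{alg:select}: the set $\mathcal{S}_Q$ is initialized to $\emptyset$ (line~\ref{algline:init_end}) and is thereafter reassigned only as the third output of \textsf{update} at line~\ref{algline:queryselection2}, which is invoked once for each $R_Q$ ranging over $\mathcal{S}_C(m, k)$ in the loop at line~\ref{algline:queryselection1}. Since \textsf{update} can at most insert its first argument into the accumulating set, an induction on the number of its invocations gives $\mathcal{S}_Q \subseteq \bigcup_{(m, k)} \mathcal{S}_C(m, k)$, where the union ranges over all pairs $(m, k)$ explored before \textsf{synthesize} returns, whether it exits early at line~\ref{algline:terminate2} or falls through to the final \textbf{return}. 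Consequently it suffices to prove the claim for an arbitrary member of some $\mathcal{S}_C(m, k)$.

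Next I would unfold the definition of $\mathcal{S}_C(m, k)$ from Algorithm~\ref{alg:refine}. Line~\ref{alg2line9} sets $\mathcal{S}_C(m, k) = \{ R_Q \in \mathcal{S}_R(m, k) \mid R_p^{*} = R_Q \}$, so every $R_Q \in \mathcal{S}_C(m, k)$ satisfies $R_Q = R_p^{*}$ as relations over the schema of $R^{*}$: only queries whose projection target is $R^{*}$ survive the refinable-query filter $R_p^{*} \subseteq \kappa^{-1}(G_Q)$ at line~\ref{alg2line6}, and $\kappa^{-1}$ always returns a well-formed conjunctive query in the sense of Definition~\ref{def:cqcq} by the bijection $\kappa$ between conjunctive queries and query graphs.

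Finally I would combine $R_Q = R_p^{*}$ with the two defining equations of a relation partition, namely $R^{*} = R_p^{*} \cup R_n^{*}$ and $R_p^{*} \cap R_n^{*} = \emptyset$. Containment of the positive tuples is immediate, since $R_p^{*} \subseteq R_p^{*} = R_Q$. Exclusion of the negative tuples follows from $R_Q \cap R_n^{*} = R_p^{*} \cap R_n^{*} = \emptyset$. These are exactly the two assertions of the theorem, so the proof is complete.

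I expect the only genuine work to be the bookkeeping of the first paragraph: one must check carefully that no query enters $\mathcal{S}_Q$ by any route other than \textsf{update} applied to an element of some $\mathcal{S}_C(m, k)$, and that both exit points of \textsf{synthesize} return precisely that $\mathcal{S}_Q$. Everything past that is a one-line consequence of the exact-match filter at line~\ref{alg2line9}; in particular, no induction on the refinement depth is needed for soundness, because $\mathcal{S}_C(m, k)$ is carved out of $\mathcal{S}_R(m, k)$ by a condition that is sound irrespective of what $\mathcal{S}_R(m, k)$ happens to contain.
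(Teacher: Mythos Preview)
Your proposal is correct and follows essentially the same route as the paper's proof: reduce membership in $\mathcal{S}_Q$ to membership in some $\mathcal{S}_C(m,k)$, then read off $R_Q = R_p^{*}$ from line~\ref{alg2line9} and conclude. Your version is in fact more careful than the paper's, which simply asserts the reduction to $\mathcal{S}_C(m,k)$ in one clause without tracing \textsf{update} or the two exit points, and does not spell out the use of $R_p^{*}\cap R_n^{*}=\emptyset$ for the exclusion half.
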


\begin{proof}
	To prove the soundness of \ToolName,
	we only need to prove that 
	for any $R_Q \in \mathcal{S}_C(m, k)$,
	$R_Q$ must contain all the positive tuples in $R_p^{*}$
	and exclude all the negative tuples in $R_n^{*}$.
	According to line \ref{alg2line9} in Algorithm~\ref{alg:refine},
	we only add a conjunctive query $R_Q$ to the set $\mathcal{S}_C(m, k)$ 
	if $R_Q$ only contains the positive tuples in $R_p^{*}$.
	Therefore, $R_Q$ is a query candidate.
	The soundness of our algorithm is proved.
\end{proof}

\begin{ntheorem}(Completeness)
	\label{thm:completeness}
	If an MMCQS problem instance has an $(m, k)$-bounded query as its solution and $k \leq K$,
	the set $\mathcal{S}_Q$ returned by Algorithm~\ref{alg:select} is not empty.
\end{ntheorem}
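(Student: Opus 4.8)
The plan is to establish completeness by a two-part argument: first showing that the bounded refinement enumerates \emph{every} $(m,k)$-bounded query candidate for $m \leq K \cdot |\mathcal{R}'|$ and $k \leq K$, and second showing that the early-termination check in Algorithm~\ref{alg:select} never discards the round in which such a candidate first appears. Suppose the MMCQS instance has an $(m_0, k_0)$-bounded query $R_Q^{*}$ as a solution with $k_0 \leq K$. By Theorem~\ref{thm:soundnessrr}, we may assume $R_Q^{*}$ uses only non-dummy relations, so its Cartesian product involves only relations in $\mathcal{R}'$; hence $m_0 \leq K \cdot |\mathcal{R}'|$ and the pair $(m_0, k_0)$ lies in the iteration range of the outer loops (lines~\ref{algline:bound1} and~\ref{algline:queryselection8}).

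First I would prove by induction on $m$ that $R_Q^{*} \in \mathcal{S}_C(m_0, k_0)$, assuming this round is actually reached. The base case $m = 1$: the only $(1,1)$-bounded query over a single relation is handled directly (line~\ref{alg2line:graph} builds the empty query graph and then line~\ref{alg2line4}--\ref{alg2line5} expand it), and a string constraint is attached via \textsf{synLCS} which, by its specification, computes the \emph{strongest} predicate satisfied by all positive tuples. For the inductive step, since $R_Q^{*}$ is a solution it contains all positive tuples, so $R_Q^{*}$ and every query graph obtained from it by deleting nodes/edges is a refinable query. Applying Property~\ref{property2} to $R_Q^{*}$, one of its ``parent'' query graphs $G_Q^{1}$ or $G_Q^{2}$ induces a refinable query in $\mathcal{S}_R(m_0 - 1, k_0 - 1)$ or $\mathcal{S}_R(m_0 - 1, k_0)$; by the induction hypothesis that parent is in the corresponding refinable set, so \textsf{refine} pops it from $W$ (lines~\ref{alg2line1}--\ref{alg2line2}), \textsf{expand} reintroduces the missing relation occurrence, and the edges $E_Q \setminus E_Q^{i}$ and $\Phi_Q \setminus \Phi_Q^{i}$ are added — here one must check that the equality edges are added by the \textsf{expand}/edge-enumeration step and that the string edges produced by \textsf{synLCS} are exactly those of $R_Q^{*}$ (or weaker predicates that still separate, in which case the produced query is itself a valid candidate). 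Since $R_Q^{*}$ excludes all negative tuples, $R_p^{*} = \kappa^{-1}(G_Q) $ when restricted appropriately, so the check at line~\ref{alg2line9} places it (or an equally-good candidate) into $\mathcal{S}_C(m_0, k_0)$, hence $\mathcal{S}_C(m_0, k_0) \neq \emptyset$.

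Next I would argue that Algorithm~\ref{alg:select} actually reaches round $(m_0, k_0)$ and does not return prematurely before $\mathcal{S}_Q$ becomes nonempty. The \textbf{continue} at lines~\ref{algline:terminate5}--\ref{algline:terminat6} only skips $m$ when \emph{both} refinable sets at level $m-1$ are empty, which by the inductive reachability above cannot happen for $m \leq m_0$; and the early \textbf{return} at lines~\ref{algline:terminate1}--\ref{algline:terminate2} fires only when $\alpha_{max} = \widetilde{\alpha}$ and $\beta_{min} \leq \widetilde{\beta}$ — but $\alpha_{max}$ and $\beta_{min}$ are only updated through \textsf{update} on elements of some $\mathcal{S}_C(m',k')$ (line~\ref{algline:queryselection2}), so before the return can trigger, $\mathcal{S}_Q$ must already contain at least one query candidate, i.e. $\mathcal{S}_Q \neq \emptyset$. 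If the loop instead runs to completion without early return, then in particular it processes round $(m_0, k_0)$, at which point $\mathcal{S}_C(m_0, k_0)$ is nonempty by the previous paragraph and \textsf{update} inserts something into $\mathcal{S}_Q$. Either way $\mathcal{S}_Q \neq \emptyset$ on return.

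I expect the main obstacle to be the bookkeeping in the inductive step: precisely matching the edge sets and the string-constraint nodes of $R_Q^{*}$ against what \textsf{expand} and \textsf{synLCS} actually generate. In particular, \textsf{synLCS} produces the \emph{strongest} string predicate consistent with the positive tuples, which may differ from the predicate appearing in $R_Q^{*}$; one must show that substituting this stronger predicate cannot cause any positive tuple to drop out (it cannot, by construction of \textsf{synLCS}) and may only further help exclude negatives, so the resulting query is still a candidate — this is where the argument needs care to conclude "$\mathcal{S}_C(m_0,k_0)$ nonempty" rather than the stronger "$R_Q^{*} \in \mathcal{S}_C(m_0,k_0)$". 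A secondary subtlety is confirming that the multiplicity bound accounting in \textsf{multiplicity} and the two \textsf{expand} branches (lines~\ref{alg2line4}--\ref{alg2line5}) jointly realize exactly the decomposition guaranteed by Property~\ref{property2}, so that no reachable parent is skipped.
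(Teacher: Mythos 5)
Your proposal follows the same route as the paper's own proof: invoke Theorem~\ref{thm:soundnessrr} to obtain a dummy-free solution, then argue that the inductive bounded refinement (via Property~\ref{property2}) places it, or an equally good candidate, into some $\mathcal{S}_C(m,k)$ with $k \leq K$, so that \textsf{update} makes $\mathcal{S}_Q$ nonempty. Your write-up is in fact more careful than the paper's two-paragraph argument, which silently skips both the early-termination analysis and the \textsf{synLCS} subtlety you correctly flag (the enumerator only ever produces the longest-common-substring literal, so one can conclude only that \emph{some} candidate lands in $\mathcal{S}_C(m_0,k_0)$, not that $R_Q^{*}$ itself does).
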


\begin{proof}
	Assume that there exists an $(m, k)$-bounded query $R_Q$ as the solution of the MMCQS problem instance.
	According to Theorem~\ref{thm:soundnessrr}, 
	we can construct another query candidate $R_Q'$ such that
	\begin{itemize}[leftmargin=*]
		\item No dummy relation appears in the Cartesian product of $R_Q'$.
		\item The query graph of $R_Q'$ is the subgraph of the query graph of $R_Q$.
	\end{itemize}
	Meanwhile, Algorithm~\ref{alg:select} invokes Algorithm~\ref{alg:refine} inductively to compute all the $(m, k)$-bounded queries
	and query candidates, where $k \leq K$.
	Also, we notice that $R_Q'$ must belong to $\mathcal{S}_R(m, k)$ and $\mathcal{S}_C(m, k)$ for some $(m, k)$,
	which implies that $\mathcal{S}_C$ can not be empty.
	Hence, the completeness of \ToolName\ is proved.
\end{proof}

\begin{ntheorem}(Optimality)
	\label{thm:optimality}
	Denote $I = \{(m, k) \ | \ 1 \leq m \leq K \cdot |\mathcal{R}'|, 1 \leq k \leq min(K, m) \}$
	and $\widetilde{\mathcal{S}} = \bigcup_{(m, k)\in I} \mathcal{S}_C(m, k)$.
	The returned query set $\mathcal{S}_Q$ of Algorithm~\ref{alg:select} satisfies:
	\begin{itemize}[leftmargin=*]
		\item $R_Q' \preceq_s R_Q$ for every $R_Q \in \mathcal{S}_Q$ and $R_Q' \in \widetilde{\mathcal{S}}$.
		\item There do not exist $R_Q \in \widetilde{\mathcal{S}} \setminus \mathcal{S}_Q$ and $R_Q' \in \widetilde{\mathcal{S}}$ such that $R_Q' \preceq_s R_Q$ and $R_Q \npreceq_s R_Q'$.
	\end{itemize}
\end{ntheorem}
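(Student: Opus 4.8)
The plan is to analyze the two ways Algorithm~\ref{alg:select} can terminate and show that in both cases the returned set $\mathcal{S}_Q$ consists exactly of the $\preceq_s$-maximal elements of $\widetilde{\mathcal{S}}$. First I would establish the bookkeeping invariant maintained by the \textsf{update} subroutine at line~\ref{algline:queryselection2}: after processing any prefix of the query candidates discovered so far, $\alpha_{max}$ equals the maximum of $\alpha_h(R_Q, s)$ over all processed candidates, $\beta_{min}$ equals the minimum structural complexity among processed candidates achieving that $\alpha_{max}$, and $\mathcal{S}_Q$ contains precisely those processed candidates $R_Q$ with $\alpha_h(R_Q, s) = \alpha_{max}$ and $\beta(R_Q) = \beta_{min}$. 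This invariant follows directly from the definition of $\preceq_s$ in Definition~\ref{def:selection_criteria}: a query is $\preceq_s$-maximal among a finite set iff it maximizes named entity coverage and, subject to that, minimizes structural complexity. Since Theorem~\ref{thm:completeness}'s proof shows Algorithm~\ref{alg:refine} is invoked for all $(m, k) \in I$ (unless early termination fires), if the loop runs to completion then every element of $\widetilde{\mathcal{S}}$ is eventually passed to \textsf{update}, so $\mathcal{S}_Q$ is exactly the set of $\preceq_s$-maximal elements of $\widetilde{\mathcal{S}}$, which gives both bullet points.

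The substantive part is the early-termination case at lines~\ref{algline:terminate1}--\ref{algline:terminate2}. I would argue that when this branch fires, $\alpha_{max} = \widetilde{\alpha}$ and $\beta_{min} \leq \widetilde{\beta}$, and I need to show no unexplored candidate in $\widetilde{\mathcal{S}}$ can dominate, or fail to be dominated by, anything in the current $\mathcal{S}_Q$. The key lemma is that $\widetilde{\alpha}$, computed at line~\ref{algline:upperbound}, is a genuine upper bound: for any conjunctive query $R_Q$ over the reduced representation $\mathcal{R}'$, $\alpha_h(R_Q, s) \leq \widetilde{\alpha}$, because $\mathcal{A}(\Theta)$ ranges only over relations in $\mathcal{R}'$ and their attributes, so the union in Definition~\ref{def:nec} is a subset of the union defining $\widetilde{\alpha}$. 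Hence $\alpha_{max} = \widetilde{\alpha}$ means no future candidate can exceed the current best coverage. For structural complexity, I would use the monotonicity built into the refinement: by Property~\ref{property2}, every $(m', k')$-bounded query candidate with $(m', k')$ lexicographically beyond the current $(m, k)$ is obtained by adding nodes/edges to some query graph in $\mathcal{S}_R(m, k) \cup \mathcal{S}_R(m, k-1)$, so its structural complexity is at least $\widetilde{\beta} \geq \beta_{min}$. Combining, any unexplored candidate $R_Q'$ either has $\alpha_h(R_Q', s) < \widetilde{\alpha} = \alpha_{max}$, hence $R_Q' \preceq_s R_Q$ and $R_Q \npreceq_s R_Q'$ for $R_Q \in \mathcal{S}_Q$ (so $R_Q'$ neither enters the maximal set nor excludes anything already in it), or $\alpha_h(R_Q', s) = \widetilde{\alpha}$ with $\beta(R_Q') \geq \widetilde{\beta} \geq \beta_{min}$, so again $R_Q' \preceq_s R_Q$; and if $\beta(R_Q') > \beta_{min}$ strictly then $R_Q \npreceq_s R_Q'$, while if equality held it would already have been in $\widetilde{\mathcal{S}}$-explored territory by the loop structure — this edge case needs care, and I would handle it by noting $\widetilde\beta$ is computed from $\mathcal{S}_R(m,k)$, the queries refinable in the \emph{next} round, so any strictly-later candidate with $\beta = \beta_{min}$ would have to refine to something of complexity exactly $\beta_{min}$, which can only be a query already in $\mathcal{S}_R(m,k)$ and thus already considered.

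The main obstacle I anticipate is precisely this last subtlety: rigorously ruling out that early termination discards a candidate that ties the incumbents on \emph{both} metrics. Making this airtight requires a careful claim that structural complexity is non-decreasing along refinement steps (adding a relation increases $m$; adding an equality or string constraint increases $\delta(\Theta)$), so $\widetilde{\beta}$ computed from the frontier $\mathcal{S}_R(m,k) \cup \mathcal{S}_R(m,k-1)$ genuinely lower-bounds the complexity of everything reachable afterward, and that any query of complexity exactly $\widetilde\beta$ reachable afterward must in fact already lie on the frontier (hence will be, or has been, processed). Once that monotonicity claim is in place the rest is routine case analysis on the definition of $\preceq_s$, so I would state and prove the monotonicity lemma first, then the $\widetilde\alpha$ upper-bound lemma, then assemble the two termination cases.
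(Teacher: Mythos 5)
Your proposal is correct and follows essentially the same route as the paper's proof: handle the two termination cases separately, use the bookkeeping invariant of \textsf{update} to cover all candidates from iterated rounds, and for early termination combine the fact that $\widetilde{\alpha}$ upper-bounds the coverage of any query over $\mathcal{R}'$ with the monotonicity of $\beta$ under refinement (via Property~\ref{property2}) so that $\widetilde{\beta}$ lower-bounds the complexity of every unexplored candidate. Your explicit treatment of the tie case $\beta(R_Q') = \beta_{min}$ is a slightly more careful rendering of the step the paper dispatches by asserting the strict inequality $\beta(R_Q') > \beta_{min}^{*}$, but it is the same argument.
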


	\begin{proof}
	We first introduce a set $I'$ to contain all the pairs $(m, k)$ iterated in the executed rounds of Algorithm~\ref{alg:select}.
	Denote $\widetilde{\mathcal{S}'} = \bigcup_{(m, k)\in I'} \mathcal{S}_C(m, k)$.
	According to the functionality of the method \textsf{update} invoked at line~\ref{algline:queryselection2} in Algorithm~\ref{alg:select},
	it has selected all the largest query candidates based on the total order in Definition~\ref{def:selection_criteria}.
	Hence, we can obtain that 
	\begin{itemize}[leftmargin=*]
		\item (P1) $R_Q' \preceq_s R_Q$ for any $R_Q \in \mathcal{S}_Q$ and $R_Q' \in \widetilde{\mathcal{S}'}$
		\item (P2) There do not exist $R_Q \in \widetilde{\mathcal{S}^{'}} \setminus \mathcal{S}_Q$ and $R_Q' \in \widetilde{\mathcal{S}^{'}}$ such that $R_Q' \preceq_s R_Q$ and $R_Q \npreceq_s R_Q'$.
	\end{itemize}
	If all the pairs $(m, k) \in I$ are iterated, we have $\widetilde{\mathcal{S}} = \widetilde{\mathcal{S}'}$.
	The theorem holds trivially.
	If Algorithm~\ref{alg:select} terminates from line~\ref{algline:terminate2}, several pairs in $I$ are not iterated,
	leaving several query candidates not enumerated.
	In what follows, we will prove the two properties in the theorem one by one for this case.
	
	First, we try to prove $R_Q' \preceq_s R_Q$ for any $R_Q \in \mathcal{S}_Q$ and $R_Q' \in \widetilde{\mathcal{S}}$,
	which comes to two cases.
	If $R'_Q \in \widetilde{\mathcal{S}'}$, we can obtain that $R_Q' \preceq_s R_Q$ according to (P1) above.
	Otherwise, there exists $(m, k) \in I \setminus I'$ such that $R'_Q \in \mathcal{S}_C(m, k)$.
	According to lines 13 and 14 in Algorithm 3,
	the query graph of $R'_Q$ must have more relations or edges than the query graphs of specific queries 
	in $\mathcal{S}_C(m-1, k-1)$ and $\mathcal{S}_C(m-1, k)$,
	which implies 
	$$\beta(R'_Q) > \min (\{   \beta(R_Q)    \ | \ R_Q \in \mathcal{S}_C(m-1, k-1) \cup \mathcal{S}_C(m-1, k)\})$$
	More generally, we have
	$$\min (\{   \beta(R_Q)    \ | \ R_Q \in \mathcal{S}_C(m, k) \}) > \min (\{   \beta(R_Q)    \ | \ R_Q \in \mathcal{S}_C(m-1, k-1) \cup \mathcal{S}_C(m-1, k)\})$$
	Therefore, we have $\beta(R'_Q) > \beta_{min}^{*}$, where $\beta_{min}^{*}$ is the value of $\beta_{min}$ when Algorithm~\ref{alg:select} terminates.
	Recap that $\beta_{min}^{*}$ is the structural complexity of the queries in $\mathcal{S}_C$,
	while the queries in $\mathcal{S}_C$ have reached $\widetilde{\alpha}$, i.e., the upper bound of the named entity coverage.
	Thus, we can easily obtain $R_Q' \preceq_s R_Q$ for any $R_Q \in \mathcal{S}_Q$ based on the total order in Definition~\ref{def:selection_criteria}.
	
	Second, we try to prove that there does not exist $R_Q \in \widetilde{\mathcal{S}} \setminus \mathcal{S}_Q$ and $R_Q' \in \widetilde{\mathcal{S}}$ such that $R_Q' \preceq_s R_Q$ and $R_Q \npreceq_s R_Q'$.
	If it does not hold, we can find $(m, k) \in I \setminus I'$ such that $R_Q \in \mathcal{S}_C(m, k)$
	and $\beta(R_Q) < \beta_{min}^{*}$.
	However, $\beta(R_Q) >  \widetilde{\beta}^{*} \geq \beta_{min}^{*}$,
	where $\widetilde{\beta}^{*}$ is the value of $\widetilde{\beta}$ when Algorithm~\ref{alg:select} terminates.
	Contradiction!
	
	Lastly, we have proved the optimality of our synthesis algorithm.
\end{proof}

\section{Implementation}
\label{sec:impl}
Established upon the industrial Datalog-based Java program analyzer in Ant Group,
\ToolName\ synthesizes conjunctive queries to support code search tasks in Java programs.
Noting that our approach is general enough to support the conjunctive query synthesis for any Datalog-based analyzer
as long as the generated relations can be formulated by Definition~\ref{def:pf}. 
In what follows,
we provide more implementation details of \ToolName.

\smallskip
\emph{\textbf{Synthesis Input Configuration.}}
We design a user interface to convenience the users to specify examples in a code snippet.
Specifically, the users can copy a desired grammatical construct from their workspace as a positive example
or write a positive example manually.
By mutating a positive example, the users can create more positive and negative examples,
eventually forming an example program.
Then we convert the program to the relational representation, which consists of 173 relations with 1,093 attributes in total, and partitions a relation into two parts to induce positive and negative tuples.
To extract the named entities from the natural language description,
we leverage the named entity recognition~\cite{ManningSBFBM14}
and construct the dictionary of entities to filter unnecessary named entities in the post-processing.
Specifically, the dictionary contains 205 words,
which are the keywords describing grammatical constructs in Java programs,
such as \text{``}method\text{''}, \text{``}parameter\text{''}, and \text{``}return\text{''}.
Furthermore, we also  instantiate the function $h$ in Definition~\ref{def:nec} to bridge the program relational representation with natural language words.
We publish all the synthesis specifications and dictionary of entities online~\cite{data}.

\smallskip
\emph{\textbf{Synthesis Algorithm Design.}}
Based on the language schema of Java, we construct the schema graph offline and persist it for synthesizing queries for a given synthesis specification.
Instead of invoking the Datalog-based analyzer,
we implement a query evaluator for conjunctive queries upon the relational representation to identify the refinable queries and query candidates,
which can improve the efficiency of the query evaluation during the synthesis.
In the bounded refinement, we set the multiplicity bound $K$ to 2 by default to support code search tasks.
To efficiently synthesize string constraints,
we leverage the generalized suffix automaton~\cite{MohriMW09} 
to identify the longest common substrings of a set of string values,
which returns the string predicate $p$ and the string literal $\ell$ with low time overhead.
Currently, \ToolName\ utilizes four predicates for string match,
while we can further extend it to support regex match by adopting existing regex synthesis techniques~\cite{LeeSO16, Chen0YDD20} 
to Algorithm~\ref{alg:refine}.


\section{Evaluation}
\label{sec:eval}

To quantify the effectiveness and efficiency of \ToolName, we conduct a comprehensive empirical evaluation
and answer the following four research questions:

\begin{itemize}[leftmargin=*]
	\item \textbf{RQ1:} How effective is \ToolName\ in the conjunctive query synthesis for code search tasks?
	\item \textbf{RQ2:} How big are the benefits of the representation reduction and the bounded refinement in terms of efficiency?
	\item  \textbf{RQ3:} Is the query candidate selection effective and necessary for the synthesis?
	\item \textbf{RQ4:} How does \ToolName\ compare to other approaches that could be used in our problem?
\end{itemize}

\textit{\textbf{Benchmark.}}
There are no existing studies targeting our multi-modal synthesis problem,
so we construct a new benchmark for evaluation,
which consists of \CaseNumber\ code search tasks.
As shown in Table~\ref{table:cases},
the tasks cover five kinds of grammatical constructs,
namely variables, expressions, statements, methods, and classes.
Specifically, 14 tasks are the variants of C++ search tasks in~\cite{NaikMSWNR21} or the query synthesis tasks in~\cite{ThakkarNSANR21}.
We also consider more advanced tasks deriving from real demands.
For example, Task 2 originates from the coding standard of a technical unit in Ant Group, while Task 21 is often conducted when the developers check the usage of the \textsf{log4j} library to improve reliability.
For each task, we specify examples in a program and a sentence as the natural language description.
The program is fed to the commercial Datalog-based analyzer in Ant Group to 
generate the relational representation and the relation partition,
while the natural language description is processed via the named entity recognition technique~\cite{ManningSBFBM14}.
The columns \textbf{L} and \textbf{(P, N)} in Table~\ref{table:cases} indicate the line numbers of the programs and the numbers of positive/negative tuples, respectively.

\smallskip
\textit{\textbf{Experimental Setup.}}
We conduct all the experiments on a Macbook Pro with a 2.6 GHz Intel\textregistered\ Core\texttrademark\ i7-9750H CPU and 16 GB physical memory.


\subsection{Overall Effectiveness}

\begin{table}[t]
	\centering
\caption{Experiment results of synthesizing conjunctive queries for code search tasks.}
	\resizebox{\linewidth}{!} 
	{
		\begin{tabular}{c|l|cc|ccc|c}
			\toprule
			\textbf{ID}  & \textbf{Description} & \textbf{L} & \textbf{(P, N)} & $\mathbf{|G_Q|}$  & \textbf{k} &  $|\mathbf{G}'_{\mathbf{\Gamma}}|$& $\mathbf{T_0}$\textbf{(s)} \\ \toprule
			1 & Local variables with double type & 10 & (2, 2) & (2, 1, 1)     &          1   & (10, 19)                  & 2.36          \\
			2 &  Float variables of which the identifier contains ``cash\text{''}&10&  (3, 1)  & (3, 2, 2)   &        1     & (9, 17)             & 2.37      \\
			3 & Public field variables of a class &6& (2, 1)  & (2, 1, 1)   &          1   & (10, 21)            & 2.30       \\
			4 & Public field variables whose names use ``cash\text{''} as suffixes&8&(3, 2) & (2, 2, 2)   &          1      & (10, 23)               & 2.49   \\
			\midrule
			5 & Arithmetic expressions using double-type operands&9 &(2, 2) & (3, 2, 2)  &          1      & (9, 29)                & 2.62   \\
			6 & Cast expressions from double-type to float-type~\cite{ThakkarNSANR21} &11&(1, 2) & (3, 2, 2)   &         2       & (9, 23)             & 2.31   \\
			7 & Arithmetic expressions only using literals as operands&19 &(2, 3) & (3, 2, 1)  &            2    & (9, 29)           & 2.76 \\
			8 & Expressions comparing a variable and  a boolean literal~\cite{NaikMSWNR21} &16&(2, 1) & (2, 1, 1)  &       1       & (8, 29)            & 2.36  \\
			9 & New expressions of ArrayList &8&(2, 1) & (2, 1, 1)  &         1       & (10, 25)          & 2.20 \\
			10 & Logical conjunctions with a boolean literal~\cite{NaikMSWNR21} &11&(3, 1) & (2, 1, 2)  &           1     & (9, 29)            & 2.31  \\
			11 & Float increment expression~\cite{ThakkarNSANR21} &11&(1, 2) & (3, 2, 2)  &           1     & (9, 23)            & 2.55\\
			12 & Expressions comparing two strings with ``==\text{''}~\cite{NaikMSWNR21} & 14 & (2, 1) & (3, 2, 3) &1 & (11, 46) & 3.01\\
			13 & Expressions performing downcasting~\cite{ThakkarNSANR21}  & 25 & (2, 1) & (3, 2, 0) & 1 & (11, 39) &2.63 \\
			\midrule
			14 & The import of LocalTime & 7&(1, 1)& (1, 0, 2)  &          1       & (9, 23)          & 2.17   \\
			15 & The import of the classes in log4j &9&(3, 1) & (1, 0, 1)  &        1     & (9, 22)          & 2.22 \\
			16 & Labeled statements using ``err\text{''} as the label~\cite{NaikMSWNR21} & 17 &(1, 1)& (1, 0, 1)  &          1     & (10, 21)       & 2.18   \\
			17 & If-statements with a boolean literal as a condition~\cite{NaikMSWNR21} &16&(2, 1)& (2, 1, 0)   &         1      & (9, 17)        & 2.24 \\
			18 & For-statements with a boolean literal as the condition~\cite{NaikMSWNR21} &15&(2, 1)& (2, 1, 0)   &          1    & (10, 25)           & 2.31 \\
			19 & Invocation of unsafe time function ``localtime\text{''}~\cite{NaikMSWNR21} & 9 & (2, 1)& (2, 1, 1) & 1 & (10, 23) & 2.23\\
			\midrule
			20 & Public methods with void return type~\cite{NaikMSWNR21} &10&(2, 1)& (3, 2, 2)    &           1    & (11, 26)          & 2.36 \\
			21  & Methods receiving a parameter with Log4jUtils type &11&(2, 1)& (3, 2, 1)     &        1     & (9, 20)         & 2.45 \\
			22 & Methods using a boolean parameter as a if-condition~\cite{ThakkarNSANR21} &29& (2, 2) & (4, 3, 0)  &        1      & (11, 26)        & 3.23\\
			23 & Methods creating a File object & 14 &(2, 1) & (3, 2, 1)    &      1     & (12, 30)             & 2.38  \\
			24 & Mutually recursive methods~\cite{NaikMSWNR21, ThakkarNSANR21} & 20 & (2, 2) & (2, 2, 0)  &           2   & (11, 27)            & 2.42\\
			25 & Overriding methods of classes~\cite{ThakkarNSANR21} & 25 & (2, 4)  &  (5, 5, 0)  &  2  &  (8, 22)  & 5.89 \\
			\midrule
			26 & User classes with ``login\text{''} methods &15&(2, 1) & (2, 1, 2)   &          1    & (11, 28)            & 2.53\\
			27 & Classes containing a field with Log4jUtils type &20&(2, 1)& (3, 2, 1)   &         1    & (12, 35)           & 2.42  \\
			28 & Classes having a subclass &25& (3, 3) & (2, 1, 0)  &       2    & (13, 33)        & 2.21    \\
			29 & Classes implementing \textsf{Comparable} interface & 16 &(2, 1) & (2, 1, 1)   &         1     & (13, 35)            & 2.58  \\
			30 & Classes containing a static method & 17 &(2, 1) & (3, 2, 1)  &       1     & (11, 28)             & 2.46  \\
			31 & Java classes with main functions & 16& (2, 1) & (2, 1, 1) & 1 & (10, 28) & 2.35\\
			\bottomrule
		\end{tabular}
	}
	\label{table:cases}
\end{table}

To evaluate the effectiveness of \ToolName,
we run it upon the synthesis specification for each code search task,
examining whether the synthesized queries express the intent correctly,
and meanwhile,
measure the time cost of synthesizing queries in each task.

In Table~\ref{table:cases}, 
the column $|\mathbf{G_Q}|$ indicates the numbers of the relations, equality constraints, and string constraints.
The column \textbf{k} shows the maximal multiplicity of a relation in a synthesized query,
while the column $|\mathbf{G}'_{\mathbf{\Gamma}}|$ indicates the numbers of nodes and edges in
the subgraph of the schema graph induced by $\mathcal{R}' \cup \{ \textsf{STR}\}$.
The time cost of \ToolName\ is shown in the column $\textbf{T}_0$.
According to the statistics, we can obtain two main findings.
First,  \ToolName\ synthesizes the queries for all the tasks successfully. 
It manipulates more than three relations in Tasks 22 and 25, which are even non-trivial for a human to achieve. 
Second, \ToolName\ synthesizes the queries with a quite low time cost.
The average time cost is 2.56 seconds, 
while most of the tasks are finished in three seconds.

As mentioned in~\cref{sec:impl},
\ToolName\ performs the bounded refinement with the multiplicity bound $K = 2$.
In our benchmark, five code search tasks demand several relations appear two times.
In practice, the searching condition can hardly relate to more than two grammatical constructs of the same kind,
so our setting of the multiplicity bound $K$ enables \ToolName\ to synthesize queries for code search tasks in real-world scenarios.
Meanwhile, we quantify the time cost of the synthesis in the cases of $K=3$ and $K=4$.
Averagely, \ToolName\ takes 2.71 seconds and 3.98 seconds under the two settings, respectively.
Thus, the overhead increases gracefully when $K$ increases,
demonstrating the great potential of \ToolName\ in efficiently synthesizing more sophisticated queries with a larger multiplicity bound.

\subsection{Ablation Study on Efficiency}

We evaluate two ablations of \ToolName,
namely \ToolNameNRR\ and \ToolNameNQR,
to quantify the impact of the representation reduction and the bounded refinement on the efficiency.
\begin{itemize}[leftmargin=*]
\item \ToolNameNRR: This ablation of \ToolName\ does not perform the representation reduction but still leverages Algorithm~\ref{alg:refine} to conduct the bounded refinement.

\item \ToolNameNQR: The ablation performs the representation reduction as \ToolName\ does, while it enumerates all the query graphs and permits each relation to appear at most $K$ times.
\end{itemize}
We measure the time cost of two ablations to quantify their efficiency.
Specifically,
we set the time budget for synthesizing queries for a single task to 30 seconds,
as a synthesizer would have little practical value for the real-world code search
if it ran out of the time budget.

\begin{figure}[t]
	\centering
	\includegraphics[width=0.75\linewidth]{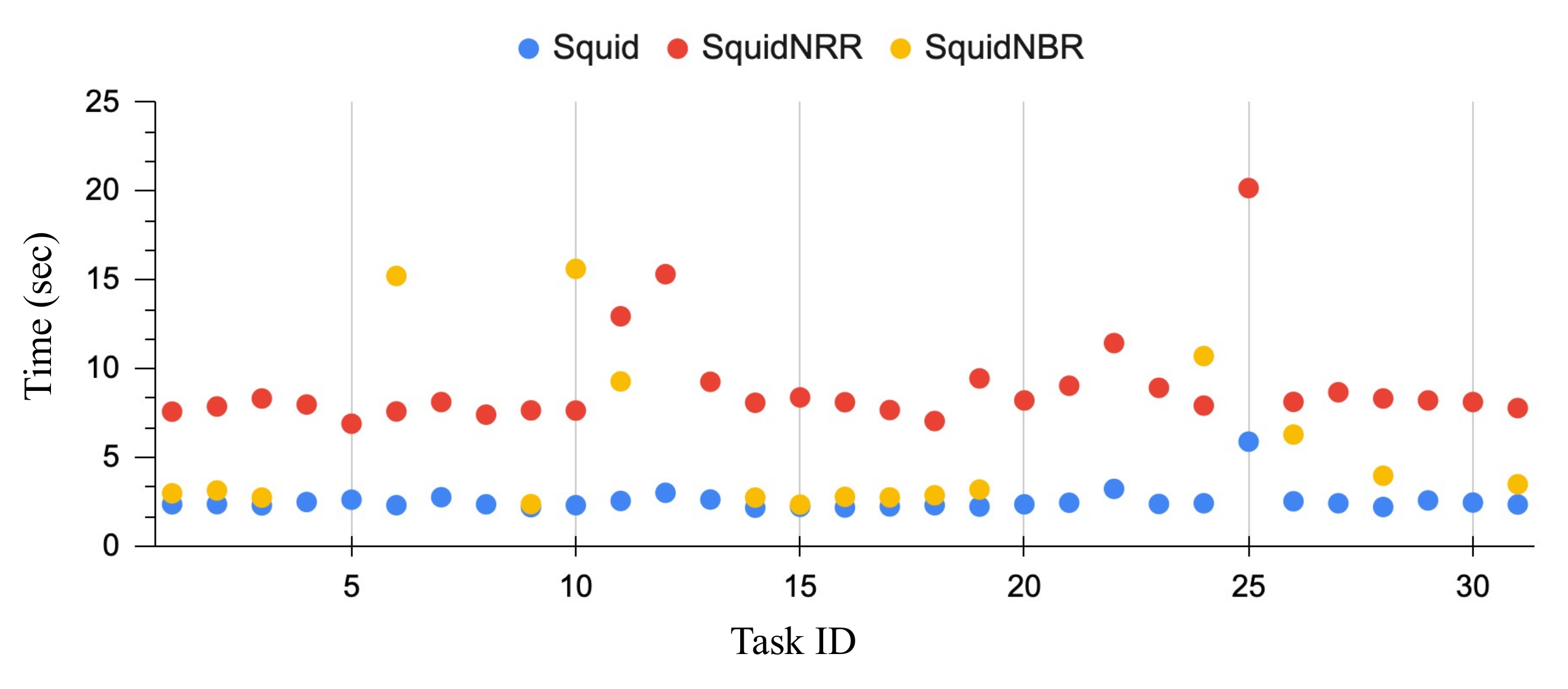}
	\vspace{-3mm}
	\caption{The time cost comparison of \ToolName, \ToolNameNRR, and \ToolNameNQR}
	\label{fig:RQ2}
\end{figure}

Figure~\ref{fig:RQ2} shows the comparison of the time cost of \ToolName\ and the two ablations.
First, the representation reduction can effectively reduce the time cost.
Specifically, the average time cost of \ToolNameNRR\ is 8.98 seconds,
indicating that the representation reduction introduces a 71.49\% reduction over the time cost.
Second, the bounded refinement has a critical impact on the efficiency of \ToolName.
Without the refinement, \ToolNameNQR\ has to explore the huge search space
induced by the non-dummy relations,
making 14 out of 31 tasks cannot be finished within the time budget,
such as Task 4, Task 5, etc.
For the failed tasks, we do not show the time cost of \ToolNameNQR\ in Figure~\ref{fig:RQ2}.
\ToolNameNQR\ also takes much more time than \ToolName,
consuming 7.89 seconds on average,
even if it successfully synthesizes the queries.

To investigate how the efficiency is improved,
we further measure the size of the subgraph of the schema graph induced by $\mathcal{R}' \cup \{ \textsf{STR}\}$.
Initially, the schema graph contains 174 nodes (including the node depicting \textsf{STR}) and 1,093 edges.
As shown in the column $|\mathbf{G}'_{\mathbf{\Gamma}}|$ of Table~\ref{table:cases},
the induced subgraph only contains around ten nodes and no more than fifty edges.
Although \ToolNameNRR\ prunes unnecessary relations by enumerating several bounded queries at the beginning of the refinement,
it has to spend more time on the query enumeration than \ToolName,
which demonstrates the critical role of the bounded refinement in our synthesis.
Besides,
the running time of \ToolNameNQR\ is similar to \ToolName\ on several benchmarks, such as Task 1, Task 2, Task 3, Task 9, etc., while it takes much longer time than \ToolName\ in other benchmarks. 
Although \ToolNameNQR\ does not discard infeasible queries, it still benefits from representation reduction. When a desired query is of small size and the reduced program representation induces a small schema graph $G_{\Gamma}'$,
\ToolNameNQR\ can terminate to find an optimal query by enumerating a few candidates. 
However, if $G_Q$ and $G_{\Gamma}'$ are large, 
\ToolNameNQR\ enumerates a large number of infeasible queries, which introduces significant overhead.

\begin{figure}[t]
	\centering
	\includegraphics[width=\linewidth]{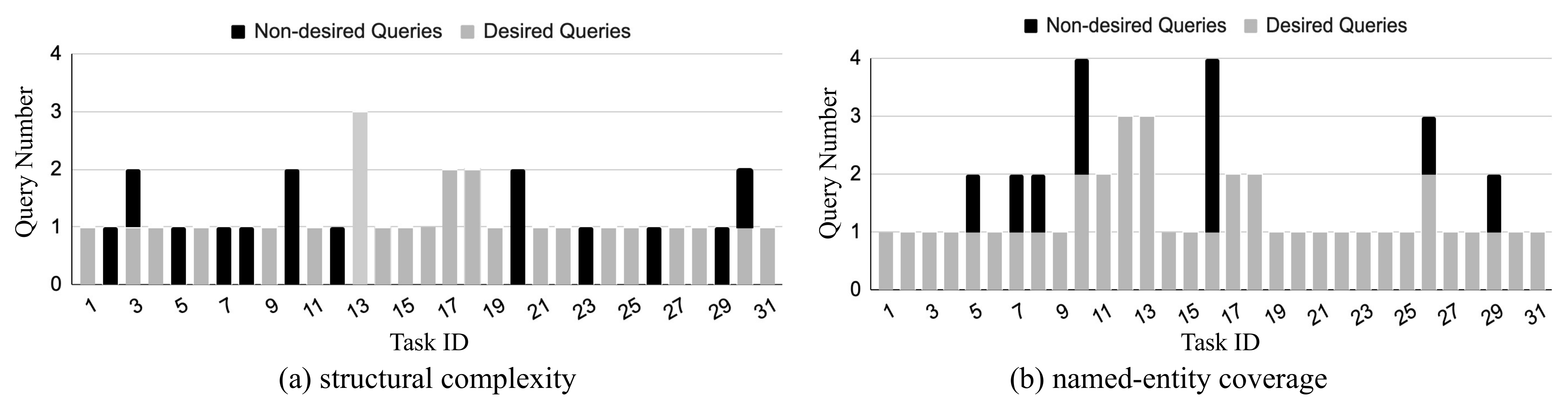}
	\vspace{-7mm}
	\caption{The numbers of synthesized queries prioritized with different metrics}   
	\vspace{-5mm}
	\label{fig:RQ3}
\end{figure}

\subsection{Impact of Selection}
To measure the impact of the query candidate selection,
we adapt each metric separately for candidate prioritization.
Specifically, we alter Algorithm~\ref{alg:select} and select the queries minimizing the structural complexity and maximizing the named entity coverage, respectively.
We then count the returned queries and inspect whether they are desired ones or not.

Figure~\ref{fig:RQ3}(a) shows the numbers of the synthesized queries with the structural complexity
as the metric.
As we can see,
\ToolName\ produces non-desired queries in 12 tasks,
while the returned set of synthesized queries in 10 tasks do not contain any desired query.
For Task 3 and Task 30, it provides the desired queries along with non-desired ones,
which makes the users confused about how to select a proper one.
Similar to $R_Q^{c1}$ in Example~\ref{ex:query_priority},
the non-desired queries are caused by the over-fitting of positive and negative examples.
Although they have the simplest form of the selection conditions,
the relationship of grammatical constructs mentioned in the natural language description is not constrained,
making synthesized queries not express the users' intent correctly.

Figure~\ref{fig:RQ3}(b) shows the numbers of the queries maximizing the named entity coverage.
\ToolName\ returns at least one non-desired query for seven tasks.
Similar to $R_Q^{c2}$ in Example~\ref{ex:query_priority},
non-desired queries come from over-complicated selection conditions.
Although the selected queries have the same named entity coverage,
several queries contain more atomic formulas than the desired ones,
posing stronger restrictions upon the code.
Besides, the synthesized queries in several tasks, e.g., Task 11 and Task 26,
have complex selection conditions
although they are equivalent under the context of code search.
However, such queries exhibit higher structural complexity,
posing more difficulty in understanding them.


\subsection{Comparison with Existing Techniques}

To the best of our knowledge, no existing technique or implemented tool targets the same problem as \ToolName.
To compare \ToolName\ with existing effort,
we adapt the state-of-the-art Datalog synthesizer~\textsc{EGS}~\cite{ThakkarNSANR21} as our baseline.
Originally, it synthesizes a conjunctive query to separate a positive tuple from all the negative ones
and then group all the conjunctive queries as the final synthesis result,
which can be theoretically a disjunctive query.
However, \textsc{EGS} does not synthesize string constraints and only prioritizes feasible solutions based on their sizes, i.e., the structural complexity in our work.
Thus, we construct two adaptions, namely~\textsc{EGS-Str} and~\textsc{EGS-StrDual},
to synthesize the queries under our problem setting.

\begin{itemize}[leftmargin=*]
	\item \textsc{EGS-Str} computes the longest substring of each string attribute in the positive tuple such that the string values of the attributes in negative ones do not contain it as the substring.
	We follow the priority function in \textsc{EGS}, which consists of the number of undesirable tuples eliminated per atomic constraint
	and the size of a query,
	to accelerate searching a query candidate with a small size.
	Finally, it obtains a query candidate for each positive tuple and groups the candidates as a result.
	
	\item \textsc{EGS-StrDual} further extends \textsc{EGS-Str} by considering the named entity coverage.
	Specifically, it prioritizes the refinable queries according to the three metrics, including the number of undesirable tuples eliminated per atomic constraint, the named entity coverage, and the size of a query.
	Other settings are the same as the ones of \textsc{EGS-Str}.
\end{itemize}

\begin{figure}[t]
	\centering
	\includegraphics[width=0.95\linewidth]{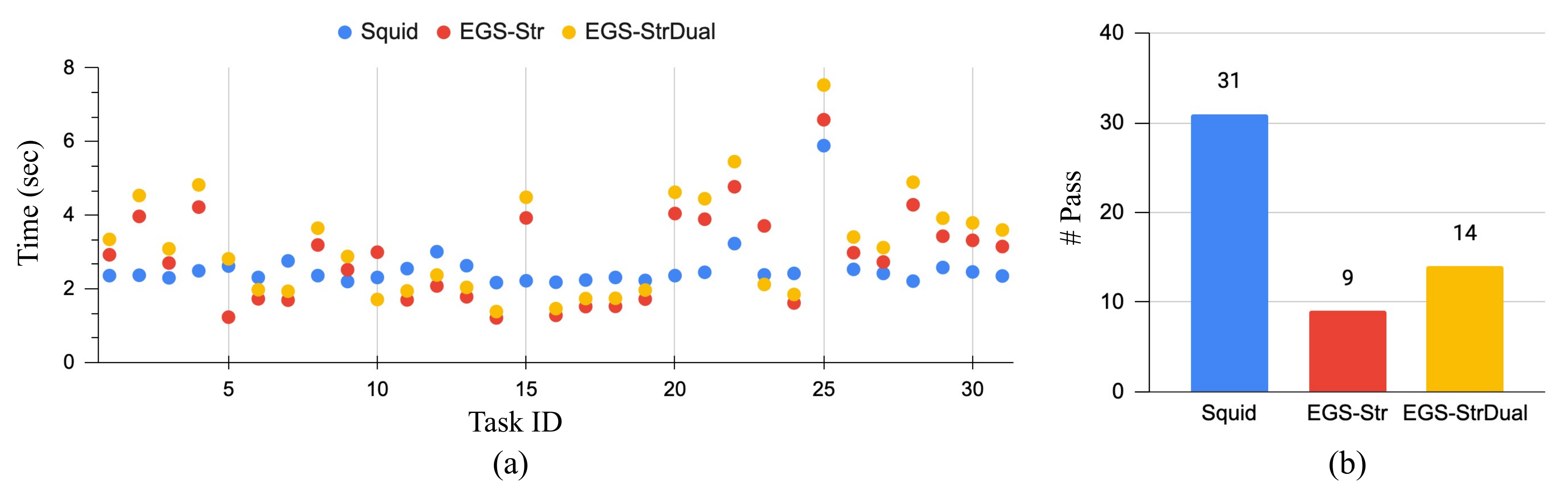}
	\vspace{-2mm}
	\caption{The time cost and the numbers of passed tasks of \ToolName, \textsc{EGS-Str}, and \textsc{EGS-StrDual}}
	\vspace{-3mm}
	\label{fig:RQ4}
\end{figure}

Figure~\ref{fig:RQ4} shows the results of the comparison.
On average, \textsc{EGS-Str} and \textsc{EGS-StrDual} spend 2.85 and 3.18 seconds on a synthesis instance, respectively,
while the average time cost of \ToolName\ is 2.56 seconds.
Although \textsc{EGS-Str} and \textsc{EGS-StrDual} accelerate the searching process with priority functions as the heuristic metrics,
they have to process the positive tuples in each round, 
and thus, the number of positive tuples can increase the overhead.

Meanwhile, the two baselines only succeed synthesizing queries for 9 and 14 tasks, respectively.
There are two root causes of their failures in synthesizing desired queries.
First, they synthesize the query candidates for each positive query separately 
and, thus, are more prone to over-fitting problems than \ToolName.
Second, the core algorithms of \textsc{EGS} and the two adaptions, which are the instantiations of inductive logic programming,
can not guarantee the obtained solutions are optimal under the given metrics.
As reported in~\cite{ThakkarNSANR21}, the query may not be of the minimal size if \textsc{EGS} leverages the number of undesirable tuples eliminated per atomic constraint to accelerate the searching process of a query candidate.
In our problem, our dual quantitative metrics increase the difficulty of achieving the optimal solutions with the two adaptions,
which causes the failures of the code search tasks.

\subsection{Discussion}

In what follows, we demonstrate the discussions on the limitations of \ToolName\
and several future works,
which can further improve the practicality of our techniques.

\smallskip
\emph{\textbf{Limitations.}}
Although \ToolName\ is demonstrated to be effective for code search,
it has two major limitations.
First, \ToolName\ cannot synthesize the query where the multiplicity of a relation is larger than the multiplicity bound $K$.
In other words, Theorem~\ref{thm:completeness} actually ensures partial completeness.
Although we may achieve completeness for all realizable instances 
by enumerating queries until obtaining a query candidate in Algorithm~\ref{alg:select},
\ToolName\ would fail to terminate for unrealizable instances.
Second,  \ToolName\ does not support synthesizing the queries with logical disjunctions.
However, when a code search task involves the matching of multiple patterns,
\ToolName\ would not discover the correct queries, which are out of the scope of the conjunctive queries.

\smallskip 
\emph{\textbf{Future Works.}}
In the future, we will attempt to propose an efficient decision procedure to identify unrealizable instances.
Equipped with the decision procedure,
we can only enumerate queries for realizable instances and generalize the query refinement by discarding the multiplicity bound.
Besides, it would be promising to generalize \ToolName\ for disjunctive query synthesis.
One possible adaption is to divide positive tuples into proper clusters and
synthesize a conjunctive part for each cluster separately,
following existing studies such as \textsc{EGS}~\cite{ThakkarNSANR21} and \textsc{RhoSynth}~\cite{Rule}.
In addition, we aim to expand \ToolName\ to diverse program domains, such as serverless functions~\cite{10.1145/3542929.3563471} and programs running on networking devices~\cite{10.1145/3386367.3431292}. 
These use cases have gained significant attention in recent years, which can pose new challenges on code search where new approaches may be needed.
Lastly, it would be meaningful to combine \ToolName\ with the techniques in the community of human-computer interaction~\cite{ChasinsGS21} to unleash its benefit for practice use.


\section{Related Work}
\ \ \ \ \emph{\textbf{Multi-modal Program Synthesis.}}
There has been a vast amount of literature on the multi-modal synthesis~\cite{ChenMF19, BaikJCJ20, ChenL0DBD21, Chen0YDD20, RazaGM15, GavranDM20, NaikMSWNR21}.
For example, the LTL formula synthesizer \textsc{LtlTalk}~\cite{GavranDM20}
maximizes the objective function that measures the similarity between the natural language description and the LTL formula,
and searches for the optimal solution that distinguishes the positive and negative examples.
\ToolName\ bears similarities to \textsc{LtlTalk} in terms of the prioritization,
while we use the named entities
to avoid the failure of semantic parsing of a sentence.
Another closely related work is a query synthesizer named \textsc{Sporq}~\cite{NaikMSWNR21}.
Based on code examples and user feedback,
\textsc{Sporq} iterates its PBE-based synthesis engine to refine the queries,
which demands verbose user interactions and a long time period.
In contrast, \ToolName\ automates the code search by solving a new multi-modal synthesis problem,
which only requires the users to specify code examples and a natural language description.
effectively relieving the user's burden in the searching process.

\smallskip
\emph{\textbf{Component-based Synthesis.}}
Several recent studies aim to compose several components (e.g., the classes and methods in the libraries)
into programs that achieve target functionalities~\cite{JhaGST10, GulwaniKT11, JamesGWDPJP20, PerelmanGBG12, GuoJJZWJP20, GveroKKP13}.
Typically, \textsc{SyPet}~\cite{FengM0DR17} and \textsc{APIphany}~\cite{GuoCT0SP22} both use the Petri net to encode the type signature of each function,
and collect the reachable paths to enumerate the well-typed sketches of the programs,
which prunes the search space at the start of the synthesis.
In our work, \ToolName\ leverages the schema graph to guide the enumerative search,
which share the similarity with existing studies.
However, our enumerative search space does not consist of the reachable paths in the schema graph,
and instead, contains different choices of selecting its nodes and edges.
Besides, unlike prior efforts~\cite{FengM0DR17, JamesGWDPJP20,GuoJJZWJP20, GuoCT0SP22}, \ToolName\ computes the activated relations and then discards unnecessary components,
i.e., dummy relations,
which distinguishes \ToolName\ significantly from other component-based synthesizers.

\smallskip
\emph{\textbf{Datalog Program Synthesis.}}
There have been many existing efforts of synthesizing Datalog programs~\cite{AlbarghouthiKNS17, SiLZAKN18, RaghothamanMZNS20, SiRHN19, MendelsonNRN21}.
For example, \textsc{Zaatar}~\cite{AlbarghouthiKNS17} encodes the input-output examples and Datalog programs with SMT formulas,
and synthesizes the candidate solution via constraint solving.
Unlike constraint-based approaches,
\textsc{ALPS}~\cite{SiLZAKN18} and \textsc{GENSYNTH}~\cite{MendelsonNRN21} synthesize 
target Datalog programs via the enumerative search,
which is similar to our synthesis algorithm.
However, existing studies do not tackle a large number of relations in the synthesis~\cite{AlbarghouthiKNS17, SiLZAKN18,  MendelsonNRN21}
or pursue an optimal solution with respect to a natural language description.
Meanwhile, they do not support the synthesis of string constraints, making their approaches incapable of string matching-based code search.
In contrast,
\ToolName\ ensures soundness, completeness, and optimality simultaneously
and synthesizes string constraints for string matching,
showing its potential in assisting real-world code search tasks.


\smallskip
\emph{\textbf{Datalog-based Program Analysis.}}
The past few decades have witnessed
the increasing popularity of Datalog-based program analysis~\cite{HajiyevVM06, WuZ021, YamaguchiGAR14, AvgustinovMJS16, SmaragdakisB10}.
For example,
\CodeQL\ encodes a program with a relational representation
and exposes a query language for query writing~\cite{AvgustinovMJS16}.
Several analyzers target more advanced semantic reasoning.
For example, the points-to and alias facts are depicted by two kinds of relations in \textsc{Doop}~\cite{BravenboerS09},
and meanwhile, pointer analysis algorithms are instantiated as Datalog rules~\cite{SmaragdakisB10}.
Other properties, such as def-use relation and type information,
can also be analyzed by existing analyzers~\cite{lam2005context, PashakhanlooN0D22}.
Our effort has shown the opportunity of unleashing the power of Datalog-based program analyzers seamlessly to support the code search automatically.



\section{Conclusion}

We propose an efficient synthesis algorithm \ToolName\ for a multi-modal conjunctive query synthesis problem,
which enables automatic code search using a Datalog-based program analyzer.
\ToolName\ reduces the search space via the representation reduction and the bounded refinement,
and meanwhile, conducts the query candidate selection with dual quantitative metrics.
It efficiently synthesizes the queries for \CaseNumber\ code search tasks 
with the guarantees of soundness, completeness, and optimality.
Its theoretical and empirical results offer strong evidence of its practical value in assisting code search in real-world scenarios.

\bibliography{sample-base}

\begin{thebibliography}{10}

\bibitem{AbiteboulHV95}
Serge Abiteboul, Richard Hull, and Victor Vianu.
\newblock {\em Foundations of Databases}.
\newblock Addison-Wesley, 1995.
\newblock URL: \url{http://webdam.inria.fr/Alice/}.

\bibitem{AlbarghouthiKNS17}
Aws Albarghouthi, Paraschos Koutris, Mayur Naik, and Calvin Smith.
\newblock Constraint-based synthesis of datalog programs.
\newblock In J.~Christopher Beck, editor, {\em Principles and Practice of
  Constraint Programming - 23rd International Conference, {CP} 2017, Melbourne,
  VIC, Australia, August 28 - September 1, 2017, Proceedings}, volume 10416 of
  {\em Lecture Notes in Computer Science}, pages 689--706. Springer, 2017.
\newblock \href {https://doi.org/10.1007/978-3-319-66158-2\_44}
  {\path{doi:10.1007/978-3-319-66158-2\_44}}.

\bibitem{AvgustinovMJS16}
Pavel Avgustinov, Oege de~Moor, Michael~Peyton Jones, and Max Sch{\"{a}}fer.
\newblock {QL:} object-oriented queries on relational data.
\newblock In Shriram Krishnamurthi and Benjamin~S. Lerner, editors, {\em 30th
  European Conference on Object-Oriented Programming, {ECOOP} 2016, July 18-22,
  2016, Rome, Italy}, volume~56 of {\em LIPIcs}, pages 2:1--2:25. Schloss
  Dagstuhl - Leibniz-Zentrum f{\"{u}}r Informatik, 2016.
\newblock \href {https://doi.org/10.4230/LIPIcs.ECOOP.2016.2}
  {\path{doi:10.4230/LIPIcs.ECOOP.2016.2}}.

\bibitem{BaikJCJ20}
Christopher Baik, Zhongjun Jin, Michael~J. Cafarella, and H.~V. Jagadish.
\newblock Duoquest: {A} dual-specification system for expressive {SQL} queries.
\newblock In David Maier, Rachel Pottinger, AnHai Doan, Wang{-}Chiew Tan,
  Abdussalam Alawini, and Hung~Q. Ngo, editors, {\em Proceedings of the 2020
  International Conference on Management of Data, {SIGMOD} Conference 2020,
  online conference [Portland, OR, USA], June 14-19, 2020}, pages 2319--2329.
  {ACM}, 2020.
\newblock \href {https://doi.org/10.1145/3318464.3389776}
  {\path{doi:10.1145/3318464.3389776}}.

\bibitem{BlumerEHW87}
Anselm Blumer, Andrzej Ehrenfeucht, David Haussler, and Manfred~K. Warmuth.
\newblock Occam's razor.
\newblock {\em Inf. Process. Lett.}, 24(6):377--380, 1987.
\newblock \href {https://doi.org/10.1016/0020-0190(87)90114-1}
  {\path{doi:10.1016/0020-0190(87)90114-1}}.

\bibitem{BravenboerS09}
Martin Bravenboer and Yannis Smaragdakis.
\newblock Strictly declarative specification of sophisticated points-to
  analyses.
\newblock In Shail Arora and Gary~T. Leavens, editors, {\em Proceedings of the
  24th Annual {ACM} {SIGPLAN} Conference on Object-Oriented Programming,
  Systems, Languages, and Applications, {OOPSLA} 2009, October 25-29, 2009,
  Orlando, Florida, {USA}}, pages 243--262. {ACM}, 2009.
\newblock \href {https://doi.org/10.1145/1640089.1640108}
  {\path{doi:10.1145/1640089.1640108}}.

\bibitem{ChandraM77}
Ashok~K. Chandra and Philip~M. Merlin.
\newblock Optimal implementation of conjunctive queries in relational data
  bases.
\newblock In John~E. Hopcroft, Emily~P. Friedman, and Michael~A. Harrison,
  editors, {\em Proceedings of the 9th Annual {ACM} Symposium on Theory of
  Computing, May 4-6, 1977, Boulder, Colorado, {USA}}, pages 77--90. {ACM},
  1977.
\newblock \href {https://doi.org/10.1145/800105.803397}
  {\path{doi:10.1145/800105.803397}}.

\bibitem{ChasinsGS21}
Sarah~E. Chasins, Elena~L. Glassman, and Joshua Sunshine.
\newblock {PL} and {HCI:} better together.
\newblock {\em Commun. {ACM}}, 64(8):98--106, 2021.
\newblock \href {https://doi.org/10.1145/3469279} {\path{doi:10.1145/3469279}}.

\bibitem{ChenL0DBD21}
Qiaochu Chen, Aaron Lamoreaux, Xinyu Wang, Greg Durrett, Osbert Bastani, and
  Isil Dillig.
\newblock Web question answering with neurosymbolic program synthesis.
\newblock In Stephen~N. Freund and Eran Yahav, editors, {\em {PLDI} '21: 42nd
  {ACM} {SIGPLAN} International Conference on Programming Language Design and
  Implementation, Virtual Event, Canada, June 20-25, 2021}, pages 328--343.
  {ACM}, 2021.
\newblock \href {https://doi.org/10.1145/3453483.3454047}
  {\path{doi:10.1145/3453483.3454047}}.

\bibitem{Chen0YDD20}
Qiaochu Chen, Xinyu Wang, Xi~Ye, Greg Durrett, and Isil Dillig.
\newblock Multi-modal synthesis of regular expressions.
\newblock In {\em Proceedings of the 41st {ACM} {SIGPLAN} International
  Conference on Programming Language Design and Implementation, {PLDI} 2020,
  London, UK, June 15-20, 2020}, pages 487--502. {ACM}, 2020.
\newblock \href {https://doi.org/10.1145/3385412.3385988}
  {\path{doi:10.1145/3385412.3385988}}.

\bibitem{ChenMF19}
Yanju Chen, Ruben Martins, and Yu~Feng.
\newblock Maximal multi-layer specification synthesis.
\newblock In Marlon Dumas, Dietmar Pfahl, Sven Apel, and Alessandra Russo,
  editors, {\em Proceedings of the {ACM} Joint Meeting on European Software
  Engineering Conference and Symposium on the Foundations of Software
  Engineering, {ESEC/SIGSOFT} {FSE} 2019, Tallinn, Estonia, August 26-30,
  2019}, pages 602--612. {ACM}, 2019.
\newblock \href {https://doi.org/10.1145/3338906.3338951}
  {\path{doi:10.1145/3338906.3338951}}.

\bibitem{ChristakisB16}
Maria Christakis and Christian Bird.
\newblock What developers want and need from program analysis: an empirical
  study.
\newblock In David Lo, Sven Apel, and Sarfraz Khurshid, editors, {\em
  Proceedings of the 31st {IEEE/ACM} International Conference on Automated
  Software Engineering, {ASE} 2016, Singapore, September 3-7, 2016}, pages
  332--343. {ACM}, 2016.
\newblock \href {https://doi.org/10.1145/2970276.2970347}
  {\path{doi:10.1145/2970276.2970347}}.

\bibitem{FengMGDC17}
Yu~Feng, Ruben Martins, Jacob~Van Geffen, Isil Dillig, and Swarat Chaudhuri.
\newblock Component-based synthesis of table consolidation and transformation
  tasks from examples.
\newblock In Albert Cohen and Martin~T. Vechev, editors, {\em Proceedings of
  the 38th {ACM} {SIGPLAN} Conference on Programming Language Design and
  Implementation, {PLDI} 2017, Barcelona, Spain, June 18-23, 2017}, pages
  422--436. {ACM}, 2017.
\newblock \href {https://doi.org/10.1145/3062341.3062351}
  {\path{doi:10.1145/3062341.3062351}}.

\bibitem{FengM0DR17}
Yu~Feng, Ruben Martins, Yuepeng Wang, Isil Dillig, and Thomas~W. Reps.
\newblock Component-based synthesis for complex apis.
\newblock In Giuseppe Castagna and Andrew~D. Gordon, editors, {\em Proceedings
  of the 44th {ACM} {SIGPLAN} Symposium on Principles of Programming Languages,
  {POPL} 2017, Paris, France, January 18-20, 2017}, pages 599--612. {ACM},
  2017.
\newblock \href {https://doi.org/10.1145/3009837.3009851}
  {\path{doi:10.1145/3009837.3009851}}.

\bibitem{Rule}
Pranav Garg and Srinivasan~H. Sengamedu.
\newblock Synthesizing code quality rules from examples.
\newblock {\em Proc. ACM Program. Lang.}, 6(OOPSLA2), oct 2022.
\newblock \href {https://doi.org/10.1145/3563350} {\path{doi:10.1145/3563350}}.

\bibitem{GavranDM20}
Ivan Gavran, Eva Darulova, and Rupak Majumdar.
\newblock Interactive synthesis of temporal specifications from examples and
  natural language.
\newblock {\em Proc. {ACM} Program. Lang.}, 4({OOPSLA}):201:1--201:26, 2020.
\newblock \href {https://doi.org/10.1145/3428269} {\path{doi:10.1145/3428269}}.

\bibitem{GottlobKS06}
Georg Gottlob, Christoph Koch, and Klaus~U. Schulz.
\newblock Conjunctive queries over trees.
\newblock {\em J. {ACM}}, 53(2):238--272, 2006.
\newblock \href {https://doi.org/10.1145/1131342.1131345}
  {\path{doi:10.1145/1131342.1131345}}.

\bibitem{GulwaniKT11}
Sumit Gulwani, Vijay~Anand Korthikanti, and Ashish Tiwari.
\newblock Synthesizing geometry constructions.
\newblock In Mary~W. Hall and David~A. Padua, editors, {\em Proceedings of the
  32nd {ACM} {SIGPLAN} Conference on Programming Language Design and
  Implementation, {PLDI} 2011, San Jose, CA, USA, June 4-8, 2011}, pages
  50--61. {ACM}, 2011.
\newblock \href {https://doi.org/10.1145/1993498.1993505}
  {\path{doi:10.1145/1993498.1993505}}.

\bibitem{GuoCT0SP22}
Zheng Guo, David Cao, Davin Tjong, Jean Yang, Cole Schlesinger, and Nadia
  Polikarpova.
\newblock Type-directed program synthesis for restful apis.
\newblock In {\em {PLDI} '22: 43rd {ACM} {SIGPLAN} International Conference on
  Programming Language Design and Implementation, San Diego, CA, USA, June 13 -
  17, 2022}, pages 122--136. {ACM}, 2022.
\newblock \href {https://doi.org/10.1145/3519939.3523450}
  {\path{doi:10.1145/3519939.3523450}}.

\bibitem{GuoJJZWJP20}
Zheng Guo, Michael James, David Justo, Jiaxiao Zhou, Ziteng Wang, Ranjit Jhala,
  and Nadia Polikarpova.
\newblock Program synthesis by type-guided abstraction refinement.
\newblock {\em Proc. {ACM} Program. Lang.}, 4({POPL}):12:1--12:28, 2020.
\newblock \href {https://doi.org/10.1145/3371080} {\path{doi:10.1145/3371080}}.

\bibitem{GveroKKP13}
Tihomir Gvero, Viktor Kuncak, Ivan Kuraj, and Ruzica Piskac.
\newblock Complete completion using types and weights.
\newblock In Hans{-}Juergen Boehm and Cormac Flanagan, editors, {\em {ACM}
  {SIGPLAN} Conference on Programming Language Design and Implementation,
  {PLDI} '13, Seattle, WA, USA, June 16-19, 2013}, pages 27--38. {ACM}, 2013.
\newblock \href {https://doi.org/10.1145/2491956.2462192}
  {\path{doi:10.1145/2491956.2462192}}.

\bibitem{HajiyevVM06}
Elnar Hajiyev, Mathieu Verbaere, and Oege de~Moor.
\newblock \emph{codeQuest: } scalable source code queries with datalog.
\newblock In Dave Thomas, editor, {\em {ECOOP} 2006 - Object-Oriented
  Programming, 20th European Conference, Nantes, France, July 3-7, 2006,
  Proceedings}, volume 4067 of {\em Lecture Notes in Computer Science}, pages
  2--27. Springer, 2006.
\newblock \href {https://doi.org/10.1007/11785477\_2}
  {\path{doi:10.1007/11785477\_2}}.

\bibitem{IntelliJ}
IntelliJ IDEA.
\newblock Structural search and replace,
  \url{https://www.jetbrains.com/help/idea/structural-search-and-replace.html},
  2022.
\newblock [Online; accessed 10-Nov-2022].
\newblock URL:
  \url{https://www.jetbrains.com/help/idea/structural-search-and-replace.html}.

\bibitem{JamesGWDPJP20}
Michael~B. James, Zheng Guo, Ziteng Wang, Shivani Doshi, Hila Peleg, Ranjit
  Jhala, and Nadia Polikarpova.
\newblock Digging for fold: synthesis-aided {API} discovery for haskell.
\newblock {\em Proc. {ACM} Program. Lang.}, 4({OOPSLA}):205:1--205:27, 2020.
\newblock \href {https://doi.org/10.1145/3428273} {\path{doi:10.1145/3428273}}.

\bibitem{JhaGST10}
Susmit Jha, Sumit Gulwani, Sanjit~A. Seshia, and Ashish Tiwari.
\newblock Oracle-guided component-based program synthesis.
\newblock In {\em Proceedings of the 32nd {ACM/IEEE} International Conference
  on Software Engineering - Volume 1, {ICSE} 2010, Cape Town, South Africa, 1-8
  May 2010}, pages 215--224. {ACM}, 2010.
\newblock \href {https://doi.org/10.1145/1806799.1806833}
  {\path{doi:10.1145/1806799.1806833}}.

\bibitem{lam2005context}
Monica~S Lam, John Whaley, V~Benjamin Livshits, Michael~C Martin, Dzintars
  Avots, Michael Carbin, and Christopher Unkel.
\newblock Context-sensitive program analysis as database queries.
\newblock In {\em Proceedings of the twenty-fourth ACM SIGMOD-SIGACT-SIGART
  symposium on Principles of database systems}, pages 1--12, 2005.

\bibitem{LeeSO16}
Mina Lee, Sunbeom So, and Hakjoo Oh.
\newblock Synthesizing regular expressions from examples for introductory
  automata assignments.
\newblock In Bernd Fischer and Ina Schaefer, editors, {\em Proceedings of the
  2016 {ACM} {SIGPLAN} International Conference on Generative Programming:
  Concepts and Experiences, {GPCE} 2016, Amsterdam, The Netherlands, October 31
  - November 1, 2016}, pages 70--80. {ACM}, 2016.
\newblock \href {https://doi.org/10.1145/2993236.2993244}
  {\path{doi:10.1145/2993236.2993244}}.

\bibitem{LeiLBR13}
Tao Lei, Fan Long, Regina Barzilay, and Martin~C. Rinard.
\newblock From natural language specifications to program input parsers.
\newblock In {\em Proceedings of the 51st Annual Meeting of the Association for
  Computational Linguistics, {ACL} 2013, 4-9 August 2013, Sofia, Bulgaria,
  Volume 1: Long Papers}, pages 1294--1303. The Association for Computer
  Linguistics, 2013.
\newblock URL: \url{https://aclanthology.org/P13-1127/}.

\bibitem{LiWWYXM16}
Xuan Li, Zerui Wang, Qianxiang Wang, Shoumeng Yan, Tao Xie, and Hong Mei.
\newblock Relationship-aware code search for javascript frameworks.
\newblock In Thomas Zimmermann, Jane Cleland{-}Huang, and Zhendong Su, editors,
  {\em Proceedings of the 24th {ACM} {SIGSOFT} International Symposium on
  Foundations of Software Engineering, {FSE} 2016, Seattle, WA, USA, November
  13-18, 2016}, pages 690--701. {ACM}, 2016.
\newblock \href {https://doi.org/10.1145/2950290.2950341}
  {\path{doi:10.1145/2950290.2950341}}.

\bibitem{LiuXLGYG22}
Chao Liu, Xin Xia, David Lo, Cuiyun Gao, Xiaohu Yang, and John~C. Grundy.
\newblock Opportunities and challenges in code search tools.
\newblock {\em {ACM} Comput. Surv.}, 54(9):196:1--196:40, 2022.
\newblock \href {https://doi.org/10.1145/3480027} {\path{doi:10.1145/3480027}}.

\bibitem{ManningSBFBM14}
Christopher~D. Manning, Mihai Surdeanu, John Bauer, Jenny~Rose Finkel, Steven
  Bethard, and David McClosky.
\newblock The stanford corenlp natural language processing toolkit.
\newblock In {\em Proceedings of the 52nd Annual Meeting of the Association for
  Computational Linguistics, {ACL} 2014, June 22-27, 2014, Baltimore, MD, USA,
  System Demonstrations}, pages 55--60. The Association for Computer
  Linguistics, 2014.
\newblock \href {https://doi.org/10.3115/v1/p14-5010}
  {\path{doi:10.3115/v1/p14-5010}}.

\bibitem{MendelsonNRN21}
Jonathan Mendelson, Aaditya Naik, Mukund Raghothaman, and Mayur Naik.
\newblock {GENSYNTH:} synthesizing datalog programs without language bias.
\newblock In {\em Thirty-Fifth {AAAI} Conference on Artificial Intelligence,
  {AAAI} 2021, Thirty-Third Conference on Innovative Applications of Artificial
  Intelligence, {IAAI} 2021, The Eleventh Symposium on Educational Advances in
  Artificial Intelligence, {EAAI} 2021, Virtual Event, February 2-9, 2021},
  pages 6444--6453. {AAAI} Press, 2021.
\newblock URL: \url{https://ojs.aaai.org/index.php/AAAI/article/view/16799}.

\bibitem{MohriMW09}
Mehryar Mohri, Pedro~J. Moreno, and Eugene Weinstein.
\newblock General suffix automaton construction algorithm and space bounds.
\newblock {\em Theor. Comput. Sci.}, 410(37):3553--3562, 2009.
\newblock \href {https://doi.org/10.1016/j.tcs.2009.03.034}
  {\path{doi:10.1016/j.tcs.2009.03.034}}.

\bibitem{NaikMSWNR21}
Aaditya Naik, Jonathan Mendelson, Nathaniel Sands, Yuepeng Wang, Mayur Naik,
  and Mukund Raghothaman.
\newblock Sporq: An interactive environment for exploring code using
  query-by-example.
\newblock In Jeffrey Nichols, Ranjitha Kumar, and Michael Nebeling, editors,
  {\em {UIST} '21: The 34th Annual {ACM} Symposium on User Interface Software
  and Technology, Virtual Event, USA, October 10-14, 2021}, pages 84--99.
  {ACM}, 2021.
\newblock \href {https://doi.org/10.1145/3472749.3474737}
  {\path{doi:10.1145/3472749.3474737}}.

\bibitem{naik2011chord}
Mayur Naik.
\newblock Chord: A versatile platform for program analysis.
\newblock In {\em Tutorial at ACM Conference on Programming Language Design and
  Implementation}, 2011.

\bibitem{PanHXD19}
Rong Pan, Qinheping Hu, Gaowei Xu, and Loris D'Antoni.
\newblock Automatic repair of regular expressions.
\newblock {\em Proc. {ACM} Program. Lang.}, 3({OOPSLA}):139:1--139:29, 2019.
\newblock \href {https://doi.org/10.1145/3360565} {\path{doi:10.1145/3360565}}.

\bibitem{PashakhanlooN0D22}
Pardis Pashakhanloo, Aaditya Naik, Yuepeng Wang, Hanjun Dai, Petros Maniatis,
  and Mayur Naik.
\newblock Codetrek: Flexible modeling of code using an extensible relational
  representation.
\newblock In {\em The Tenth International Conference on Learning
  Representations, {ICLR} 2022, Virtual Event, April 25-29, 2022}.
  OpenReview.net, 2022.
\newblock URL: \url{https://openreview.net/forum?id=WQc075jmBmf}.

\bibitem{PerelmanGBG12}
Daniel Perelman, Sumit Gulwani, Thomas Ball, and Dan Grossman.
\newblock Type-directed completion of partial expressions.
\newblock In {\em {ACM} {SIGPLAN} Conference on Programming Language Design and
  Implementation, {PLDI} '12, Beijing, China - June 11 - 16, 2012}, pages
  275--286. {ACM}, 2012.
\newblock \href {https://doi.org/10.1145/2254064.2254098}
  {\path{doi:10.1145/2254064.2254098}}.

\bibitem{RaghothamanMZNS20}
Mukund Raghothaman, Jonathan Mendelson, David Zhao, Mayur Naik, and Bernhard
  Scholz.
\newblock Provenance-guided synthesis of datalog programs.
\newblock {\em Proc. {ACM} Program. Lang.}, 4({POPL}):62:1--62:27, 2020.
\newblock \href {https://doi.org/10.1145/3371130} {\path{doi:10.1145/3371130}}.

\bibitem{RazaGM15}
Mohammad Raza, Sumit Gulwani, and Natasa Milic{-}Frayling.
\newblock Compositional program synthesis from natural language and examples.
\newblock In {\em Proceedings of the Twenty-Fourth International Joint
  Conference on Artificial Intelligence, {IJCAI} 2015, Buenos Aires, Argentina,
  July 25-31, 2015}, pages 792--800. {AAAI} Press, 2015.
\newblock URL: \url{http://ijcai.org/Abstract/15/117}.

\bibitem{Log4j}
Logging Services.
\newblock Apache log4j security vulnerabilities ,
  \url{https://logging.apache.org/log4j/2.x/security.html}, 2021.
\newblock [Online; accessed 10-Nov-2022].
\newblock URL: \url{https://logging.apache.org/log4j/2.x/security.html}.

\bibitem{SiLZAKN18}
Xujie Si, Woosuk Lee, Richard Zhang, Aws Albarghouthi, Paraschos Koutris, and
  Mayur Naik.
\newblock Syntax-guided synthesis of datalog programs.
\newblock In Gary~T. Leavens, Alessandro Garcia, and Corina~S. Pasareanu,
  editors, {\em Proceedings of the 2018 {ACM} Joint Meeting on European
  Software Engineering Conference and Symposium on the Foundations of Software
  Engineering, {ESEC/SIGSOFT} {FSE} 2018, Lake Buena Vista, FL, USA, November
  04-09, 2018}, pages 515--527. {ACM}, 2018.
\newblock \href {https://doi.org/10.1145/3236024.3236034}
  {\path{doi:10.1145/3236024.3236034}}.

\bibitem{SiRHN19}
Xujie Si, Mukund Raghothaman, Kihong Heo, and Mayur Naik.
\newblock Synthesizing datalog programs using numerical relaxation.
\newblock In Sarit Kraus, editor, {\em Proceedings of the Twenty-Eighth
  International Joint Conference on Artificial Intelligence, {IJCAI} 2019,
  Macao, China, August 10-16, 2019}, pages 6117--6124. ijcai.org, 2019.
\newblock \href {https://doi.org/10.24963/ijcai.2019/847}
  {\path{doi:10.24963/ijcai.2019/847}}.

\bibitem{SmaragdakisB10}
Yannis Smaragdakis and Martin Bravenboer.
\newblock Using datalog for fast and easy program analysis.
\newblock In Oege de~Moor, Georg Gottlob, Tim Furche, and Andrew~Jon Sellers,
  editors, {\em Datalog Reloaded - First International Workshop, Datalog 2010,
  Oxford, UK, March 16-19, 2010. Revised Selected Papers}, volume 6702 of {\em
  Lecture Notes in Computer Science}, pages 245--251. Springer, 2010.
\newblock \href {https://doi.org/10.1007/978-3-642-24206-9\_14}
  {\path{doi:10.1007/978-3-642-24206-9\_14}}.

\bibitem{CodeQL}
\textsc{CodeQL}.
\newblock {CodeQL for Java}.
\newblock
  \url{https://codeql.github.com/docs/codeql-language-guides/codeql-for-java/},
  2022.
\newblock [Online; accessed 10-Nov-2022].

\bibitem{ThakkarNSANR21}
Aalok Thakkar, Aaditya Naik, Nathaniel Sands, Rajeev Alur, Mayur Naik, and
  Mukund Raghothaman.
\newblock Example-guided synthesis of relational queries.
\newblock In Stephen~N. Freund and Eran Yahav, editors, {\em {PLDI} '21: 42nd
  {ACM} {SIGPLAN} International Conference on Programming Language Design and
  Implementation, Virtual Event, Canada, June 20-25, 2021}, pages 1110--1125.
  {ACM}, 2021.
\newblock \href {https://doi.org/10.1145/3453483.3454098}
  {\path{doi:10.1145/3453483.3454098}}.

\bibitem{TianR17}
Yuchi Tian and Baishakhi Ray.
\newblock Automatically diagnosing and repairing error handling bugs in {C}.
\newblock In Eric Bodden, Wilhelm Sch{\"{a}}fer, Arie van Deursen, and Andrea
  Zisman, editors, {\em Proceedings of the 11th Joint Meeting on Foundations of
  Software Engineering, {ESEC/FSE} 2017, Paderborn, Germany, September 4-8,
  2017}, pages 752--762. {ACM}, 2017.
\newblock \href {https://doi.org/10.1145/3106237.3106300}
  {\path{doi:10.1145/3106237.3106300}}.

\bibitem{data}
\ToolName.
\newblock {SquidData}.
\newblock \url{https://github.com/SquidData/SquidData}, 2022.
\newblock [Online; accessed 10-Nov-2022].

\bibitem{10.1145/3542929.3563471}
Jianfeng Wang, Tam\'{a}s L\'{e}vai, Zhuojin Li, Marcos A.~M. Vieira, Ramesh
  Govindan, and Barath Raghavan.
\newblock Quadrant: A cloud-deployable nf virtualization platform.
\newblock In {\em Proceedings of the 13th Symposium on Cloud Computing}, SoCC
  '22, page 493–509, New York, NY, USA, 2022. Association for Computing
  Machinery.
\newblock \href {https://doi.org/10.1145/3542929.3563471}
  {\path{doi:10.1145/3542929.3563471}}.

\bibitem{wilson2000java}
Brendon~J Wilson.
\newblock Java coding convention, 2000.

\bibitem{WuZ021}
Xiuheng Wu, Chenguang Zhu, and Yi~Li.
\newblock {DIFFBASE:} a differential factbase for effective software evolution
  management.
\newblock In Diomidis Spinellis, Georgios Gousios, Marsha Chechik, and
  Massimiliano~Di Penta, editors, {\em {ESEC/FSE} '21: 29th {ACM} Joint
  European Software Engineering Conference and Symposium on the Foundations of
  Software Engineering, Athens, Greece, August 23-28, 2021}, pages 503--515.
  {ACM}, 2021.
\newblock \href {https://doi.org/10.1145/3468264.3468605}
  {\path{doi:10.1145/3468264.3468605}}.

\bibitem{XiongW22}
Yingfei Xiong and Bo~Wang.
\newblock {L2S:} {A} framework for synthesizing the most probable program under
  a specification.
\newblock {\em {ACM} Trans. Softw. Eng. Methodol.}, 31(3):34:1--34:45, 2022.
\newblock \href {https://doi.org/10.1145/3487570} {\path{doi:10.1145/3487570}}.

\bibitem{Yaghmazadeh0DD17}
Navid Yaghmazadeh, Yuepeng Wang, Isil Dillig, and Thomas Dillig.
\newblock Sqlizer: query synthesis from natural language.
\newblock {\em Proc. {ACM} Program. Lang.}, 1({OOPSLA}):63:1--63:26, 2017.
\newblock \href {https://doi.org/10.1145/3133887} {\path{doi:10.1145/3133887}}.

\bibitem{YamaguchiGAR14}
Fabian Yamaguchi, Nico Golde, Daniel Arp, and Konrad Rieck.
\newblock Modeling and discovering vulnerabilities with code property graphs.
\newblock In {\em 2014 {IEEE} Symposium on Security and Privacy, {SP} 2014,
  Berkeley, CA, USA, May 18-21, 2014}, pages 590--604. {IEEE} Computer Society,
  2014.
\newblock \href {https://doi.org/10.1109/SP.2014.44}
  {\path{doi:10.1109/SP.2014.44}}.

\bibitem{YangSLYC18}
Junwen Yang, Pranav Subramaniam, Shan Lu, Cong Yan, and Alvin Cheung.
\newblock How \emph{not} to structure your database-backed web applications: a
  study of performance bugs in the wild.
\newblock In Michel Chaudron, Ivica Crnkovic, Marsha Chechik, and Mark Harman,
  editors, {\em Proceedings of the 40th International Conference on Software
  Engineering, {ICSE} 2018, Gothenburg, Sweden, May 27 - June 03, 2018}, pages
  800--810. {ACM}, 2018.
\newblock \href {https://doi.org/10.1145/3180155.3180194}
  {\path{doi:10.1145/3180155.3180194}}.

\bibitem{10.1145/3386367.3431292}
Jane Yen, Jianfeng Wang, Sucha Supittayapornpong, Marcos A.~M. Vieira, Ramesh
  Govindan, and Barath Raghavan.
\newblock Meeting slos in cross-platform nfv.
\newblock In {\em Proceedings of the 16th International Conference on Emerging
  Networking EXperiments and Technologies}, CoNEXT '20, page 509–523, New
  York, NY, USA, 2020. Association for Computing Machinery.
\newblock \href {https://doi.org/10.1145/3386367.3431292}
  {\path{doi:10.1145/3386367.3431292}}.

\bibitem{ZhouBCW22}
Xiangyu Zhou, Rastislav Bod{\'{\i}}k, Alvin Cheung, and Chenglong Wang.
\newblock Synthesizing analytical {SQL} queries from computation demonstration.
\newblock In {\em {PLDI} '22: 43rd {ACM} {SIGPLAN} International Conference on
  Programming Language Design and Implementation, San Diego, CA, USA, June 13 -
  17, 2022}, pages 168--182. {ACM}, 2022.
\newblock \href {https://doi.org/10.1145/3519939.3523712}
  {\path{doi:10.1145/3519939.3523712}}.

\end{thebibliography}

\end{document}